\def\baa{\begin{align}}
\def\eaa{\end{align}}
\newcommand{\bsq}{\begin{subequations}}
\newcommand{\esq}{\end{subequations}}
\newcommand{\beq}{\begin{equation}}
\newcommand{\eeq}{\end{equation}}
\newcommand{\bq}{\begin{eqnarray}}
\newcommand{\eq}{\end{eqnarray}}
\newcommand{\bqn}{\begin{eqnarray*}}
	\newcommand{\eqn}{\end{eqnarray*}}
\newcommand{\bee}{\begin{enumerate}}
	\newcommand{\eee}{\end{enumerate}}
\newcommand{\bi}{\begin{itemize}}
	\newcommand{\ei}{\end{itemize}}
\newcommand{\diag}{\mathrm{diag}}
\newcommand{\wang}[1]{\ifthenelse{\boolean{showcomments}}
	{ \textcolor[rgb]{1,0,1}{(ZW:  #1)}}{}}
\newcommand{\fliu}[1]{\ifthenelse{\boolean{showcomments}}
	{ \textcolor{red}{(FL:  #1)}}{}}
\newcommand{\zhao}[1]{\ifthenelse{\boolean{showcomments}}
	{ \textcolor{green}{(JP:  #1)}}{}}
\newcommand{\slow}[1]{\ifthenelse{\boolean{showcomments}}
	{ \textcolor{blue}{(SL:  #1)}}{}}
\theoremstyle{definition}
\newtheorem{theorem}{Theorem}
\newtheorem{lemma}[theorem]{Lemma}
\theoremstyle{definition}
\newtheorem{definition}{Definition}
\newtheorem{remark}{Remark}
\newtheorem{assumption}{\textit{Assumption}}
\let\MYoriglatexcaption\caption
\renewcommand{\caption}[2][\relax]{\MYoriglatexcaption[#2]{#2}}
\begin{document}
\graphicspath{{Paper_Fig/}}
\setstretch{0.96}

	\title{Distributed Generalized Nash Equilibrium Seeking for Energy Sharing Games}
	
	\author{Zhaojian~Wang, 
		Feng~Liu,
		Zhiyuan~Ma,
		Yue Chen,
		Mengshuo Jia,
		Wei Wei,
		and Qiuwei Wu
}


	\maketitle
	
	\begin{abstract}                                               	
		With the proliferation of distributed generators and energy storage systems, traditional passive consumers in power systems have been gradually evolving into the so-called ``prosumers", i.e., proactive consumers, which can both produce and consume power. To encourage energy exchange among prosumers, energy sharing is increasingly adopted, which is usually formulated as a generalized Nash game (GNG). In this paper, a distributed approach is proposed to seek the Generalized Nash equilibrium (GNE) of the energy sharing game. To this end, we convert the GNG into an equivalent optimization problem. A Krasnosel'ski{\v{i}}-Mann iteration type algorithm is thereby devised to solve the problem and consequently find the GNE in a distributed manner. The convergence of the proposed algorithm is proved rigorously based on the nonexpansive operator theory. The performance of the algorithm is validated by experiments with three prosumers, and the scalability is tested by simulations using 123 prosumers.
	\end{abstract}
	\begin{IEEEkeywords}
		Energy sharing, prosumer, distributed algorithm, generalized Nash game, generalized Nash Equilibrium.
	\end{IEEEkeywords}

\section{Introduction}
It has been recognized that our power system is undergoing a fundamental transition
due to: 1) The proliferation of distributed generations (DGs), such as wind turbines, photovoltaics, energy storage systems and electric vehicles \cite{Chiang2010Building,Mani2019digital,Wang2017Optimal,Tran2012Realizing,Zhang2019Vehicle}; 2) The advancement of communications and control in consumer-level via smart appliances and energy management systems \cite{Han2010Smart,weng2018distributed,Zhang2019event}. Together, these changes allow traditional passive consumers to convert into the so-called  ``prosumers", i.e., proactive consumers, that can actively regulate their generation and consumption \cite{Dimeas2014Smart,Parag2016Electricity,Morstyn2018Using,Liu2017An,Chen2019An}.
Conventionally, a hierarchical structure is utilized in the power system energy management, usually in a centralized way. However, the centralized manner may face great challenges raised by the ever-increasing number of prosumers and uncertainties of renewable generations. This essentially advocates a distributed paradigm \cite{Distributed_II:Wang,dorfler2016breaking,wang2017unified,wang2018distributed} in energy management. Particularly, energy sharing turns to be a promising form of the market that encourages energy trading among prosumers.  As prosumers typically belong to different owners, the inherent competition may lead to strategic behaviors in energy sharing. In this situation, each prosumer intends to optimize its own profit while maintaining the power balance over the whole system. This leads to a generalized Nash game (GNG) with global constraints. In this regard, it is desirable to investigate distributed approaches to seeking the generalized Nash equilibrium (GNE) of the GNG. 

There are many papers investigating distributed methods of GNE seeking, which can be roughly divided into two categories in terms of methods used: gradient-based algorithms \cite{liang2017distributed,ZENG201920,yi2017distributed,yi2019asynchronous,Yu2017Distributed_learning,Paccagnan2019Nash}, and proximal-point algorithms \cite{SALEHISADAGHIANI20176166,yi2018distributed_layered,Belgioioso2019a,SALEHISADAGHIANI2019Distributed}. 
In the first type, the pseudo-gradient of each player's disutility function is utilized to seek the GNE, including continuous-time algorithms  \cite{liang2017distributed,ZENG201920} and discrete algorithms \cite{yi2017distributed,yi2019asynchronous,Yu2017Distributed_learning,Paccagnan2019Nash}. In \cite{liang2017distributed}, a distributed continuous-time projection-based algorithm is proposed to seek the GNE of aggregative games with linear coupled constraints. This is extended to the case where players have nonsmooth payoff functions in \cite{ZENG201920}. In \cite{yi2017distributed}, two distributed primal-dual algorithms are proposed for computing a GNE in noncooperative games with shared affine constraints. The operator splitting method is utilized to prove the convergence of the algorithms. It is further improved in \cite{yi2019asynchronous} to consider communication time delays and partial-decision information. In \cite{Yu2017Distributed_learning}, three stochastic gradient strategies are developed to seek GNE where agents are subject to randomness in the environment of unknown statistical distribution. In \cite{Paccagnan2019Nash}, the relations between Nash and Wardrop equilibria of the aggregative game are investigated and two algorithms are proposed to seek the equilibrium. 

In the proximal-point algorithms, the Alternating Direction Method of Multipliers (ADMM) method is widely used. An inexact-ADMM algorithm is proposed in \cite{SALEHISADAGHIANI20176166}. This method is improved in \cite{SALEHISADAGHIANI2019Distributed}, where each player only has partial information of their opponents and the communication graph is not necessarily the same as the cost dependency graph of each player. In \cite{yi2018distributed_layered}, two double-layer preconditioned proximal-point algorithms are proposed to seek GNE with both coupled equality and inequality constraints, respectively. For the GNG with a special structure, i.e., the coupling in the cost functions of the agents is linear, a distributed proximal-point algorithm is developed in \cite{Belgioioso2019a} for GNG with maximally monotone pseudo-gradient.

The aforementioned works have made great progress in the context of distributed GNE seeking. In most of the works, if the objective function of one player is associated with decisions of all players, each player is required to communicate with all of the other players, i.e., full information is needed. Then, the communication network is very dense. To address this problem, a local estimation of the overall decision profile for each player is added in \cite{pavel2019distributed,yi2019asynchronous}. However, this will increase one order in the algorithm, making it more complicated. 

In this paper, we offer a different perspective for seeking the GNE in the GNG when full information is included in the objective functions. The energy sharing game is transformed into an equivalent optimization problem. Instead of dealing with the GNG directly, we alternatively solve an equivalent optimization problem of the original game problem. Both experiments and simulations are utilized to validate our method. The main contributions are as follows.
\begin{itemize}
	\item A systematic way is proposed to transform the energy sharing game into an equivalent centralized optimization problem, which could be solved in a fully distributed way with neighboring communication. Moreover, the existence and uniqueness of the GNE of the energy sharing game are proved. 
	\item A distributed method is devised based on Krasnosel'ski{\v{i}}-Mann iteration to solve the equivalent counterpart instead of solving the original game problem directly. By constructing a firmly nonexpansive operator, we prove that the proposed distributed algorithm converges to the GNE of the original energy sharing game. 
\end{itemize}


The rest of this paper is organized as follows. In Section \ref{Preliminaries}, we briefly introduce some necessary preliminaries. 
Section \ref{model} formulates the energy sharing model. In Section \ref{Properties}, the existence and uniqueness of GNE are analyzed. A distributed GNE seeking algorithm is proposed in Section \ref{Algorithm}. The convergence of the algorithm is proved in Section \ref{Convergence}. The effectiveness of the algorithm is verified in Section \ref{Experiments} by experiments and simulation studies. Section \ref{Conclusion} concludes the paper.

\section{Preliminaries}\label{Preliminaries}
In this paper, use $\mathbb{R}^n$ to denote the $n$-dimensional Euclidean space.  For a matrix $A$, $[A_{ij}]$ is the entry in the $i$-th row and $j$-th column of $A$.
For vectors $x,y\in \mathbb{R}^n$, $x^{\mathrm{T}}y=\left\langle x,y \right\rangle$ denotes the inner product of $x,y$. 
$\left\|x \right\|_2=\sqrt{x^{\mathrm{T}}x}$ denotes the Euclidean norm of $x$. 
Denote the inner product under a positive definite matrix $Q$ by $\left\langle x,y \right\rangle_Q=\left\langle Q x,y \right\rangle$. Similarly, the norm induced by $Q$ is $\left\|x \right\|_Q=\sqrt{\left\langle Q x,x \right\rangle}$. 
The following relationship holds for a $Q$-induced norm.
\begin{equation}\label{relationship}
\|a-c\|_{Q}^{2}-\|b-c\|_{Q}^{2}=2\langle a-b, a-c\rangle_{Q}-\|a-b\|_{Q}^{2}
\end{equation}
which can be obtained by the equation $\|a+b\|_{Q}^{2}=$
$\|a\|_{Q}^{2}+2\langle a, b\rangle_{Q}+\|b\|_{Q}^{2}$.

The identity matrix with dimension $n$ is denoted by $I_n$. Use $\prod_{i=1}^n\Omega_i$ to denote the Cartesian product of the sets $\Omega_i, i=1, \cdots, n$. Define the projection of $x$ onto a set $\Omega$ as 
\begin{equation}
\label{def:projection}
\mathcal{P}_{\Omega}(x)=\arg \min_{y\in \Omega}\left\|x-y \right\|_2
\end{equation}
Use ${\rm{Id}}$ to denote the identity operator, i.e., ${\rm{Id}}(x)=x$, $\forall x$. Define $N_\Omega(x)=\{v|\left\langle v, y-x\right\rangle\le 0, \forall y\in \Omega\}$. We have $\mathcal{P}_{\Omega}(x)=({\rm{Id}}+N_{\Omega})^{-1}(x)$  \cite[Chapter 23.1]{bauschke2011convex}.

%


For a single-valued operator $\mathcal{T}:\Omega\subset\mathbb{R}^n\rightarrow\mathbb{R}^n$, a point $x\in \Omega$ is a fixed point of $\mathcal{T}$ if $\mathcal{T}(x)\equiv x$. The set of fixed points of $\mathcal{T}$ is denoted by ${Fix}(\mathcal{T})$. $\mathcal{T}$ is nonexpansive if $\left\|\mathcal{T}(x)- \mathcal{T}(y)\right\|\le\left\|x- y\right\|, \forall x, y \in \Omega$. 
For $\alpha\in (0,1)$, $\mathcal{T}$ is called $\alpha$-averaged if there exists a nonexpansive operator $\mathcal{R}$ such that $\mathcal{T}=(1-\alpha){\rm{Id}}+\alpha \mathcal{R}$. Use $\mathcal{A}(\alpha)$ to denote the class of $\alpha$-averaged operators. If $\mathcal{T}\in \mathcal{A}(\frac{1}{2})$, $ \mathcal{T} $ is called firmly nonexpansive. The graph of $\mathcal{T}$ is $\operatorname{gra} \mathcal{T}=\left\{(x, u) \in \mathbf{R}^{n} \times \mathbf{R}^{n} | u \in \mathcal{T} (x)\right\}$. $\mathcal{T}$ is monotone if $\forall(x, u),\ \forall(y, v) \in \operatorname{gra} \mathcal{T},\langle x-y, u-v\rangle \geq0$. $ \mathcal{T} $ is maximally monotone if $\operatorname{gra} \mathcal{T}$ is not strictly contained in the graph of any other monotone operator.

\section{Energy Sharing Game}\label{model}
In this section, the model of prosumers is introduced. Then, we formulate the energy sharing game. 
\subsection{Model of prosumers}
In this paper, we consider the energy sharing problem among a set of prosumers, denoted by $ \mathcal{N}=\{1,2,...n\} $. The communication edge is denoted by $\mathcal{E}\subseteq \mathcal{N}\times \mathcal{N}$. For a prosumer $ i $, the set of its neighbors is denoted by $ N_i $. If $ j\in N_i $, prosumers $ i $ and $ j $ can communicate directly. The Laplacian matrix of the communication graph is denoted by $L$ and we have $\textbf{1}^{\rm T}L=0$, where $ \textbf{1} $ is a vector with all of the components as $1$.

The scenario is that each prosumer has a power shortage or surplus, denoted by $D_{i}$. To balance its power, it can produce power itself, denoted by $ p_i $, or buy power from (sell power to) the grid, denoted by $ q_i $. 
In the sharing market, the demand function of each prosumer can be expressed by
\begin{equation}
q_i=a_i\mu_c+b_i
\end{equation}
where $\mu_c$ is the market clearing price, $ b_i $ is the willingness to buy. $ a_i < 0 $ implies the price elasticity.
The market clears when the net quantity $ \sum\nolimits_{i} q_i =0 $ and the obtained sharing price is\footnote{Given a collection of $z_i$ for $i$ in a certain set $Z$, $z$ denotes the column vector
	$z := (z_i, i\in Z)$ with a proper dimension with $z_i$ as its components.}
\begin{equation}\label{price}
\mu_c=-\frac{\textbf{1}^{\rm T}b}{\textbf{1}^{\rm T}a}=-\frac{\textbf{1}^{\rm T}b}{N\overline a}
\end{equation}
where $ \overline a=\frac{\textbf{1}^{\rm T}a}{N} $ is the average value of $a_i$. In the rest of the paper, $b_{-i}:=(b_1, b_2, \cdots, b_{i-1}, b_{i+1}, \cdots, b_n)^{\rm T}$.

\subsection{Energy sharing game}
 
In the energy sharing game, each prosumer intends to minimize its cost while maintaining the global power balance. The optimization problem of each prosumer is
\begin{subequations}
	\label{Sharing_problem}     
	\begin{align}
	\min\limits_{p_i,b_i}\quad& f_i(p_i,b)= h_i(p_i)+(a_i\mu_c(b)+b_i)\mu_c(b)
	\label{Sharing_problem1}
	\\ 
	\text{s.t.} \quad
	& p_i+a_i\mu_c(b)+b_i=D_i 
	\label{Sharing_problem2}
	\\
	\label{Sharing_problem3}
	&\sum\nolimits_{i\in\mathcal{N}} (a_i\mu_c(b)+b_i) =0 
	\\
	\label{Sharing_problem4}
	&\underline p_i\le p_i\le \overline p_i  
	\end{align}
\end{subequations}
The disutility function consists of two parts, where $ h_i(p_i) $ is the cost of prosumer $ i $ to produce $p_i$ and $ (a_i\mu_c(b)+b_i)\mu_c(b)= q_i\mu_c(b)$ is the cost of buying power $ q_i $. 

{\color{black}
	The Lagrangian of \eqref{Sharing_problem} is 
	\begin{align}\label{Lagrangian}
	&\mathop{\mathcal {L}(p_i,b,\lambda_i,\eta_i)}\limits_{\underline p_i\le p_i\le \overline p_i,b_i}=f_i(p_i,b)  + \lambda_i\left(p_i+a_i\mu_c(b)+b_i-D_i\right)\nonumber\\
	&\quad+ \eta_i\left(\sum\nolimits_{i\in\mathcal{N}} (a_i\mu_c(b)+b_i)\right)
	\end{align}
	where $\lambda_i, \eta_i$ are the Lagrangian multipliers.
}

Regarding the energy sharing game \eqref{Sharing_problem}, we have following assumptions.
\begin{assumption}\label{convex}
	The functions $ h_i(p_i) $ are convex and differentiable.
\end{assumption}
\begin{assumption}\label{Slater}
	For a given $ b_{-i} $, the Slater's condition of the problem \eqref{Sharing_problem} holds \cite[Chapter 5.2.3]{boyd2004convex}, i.e., problem \eqref{Sharing_problem} is feasible.
\end{assumption}
\begin{assumption}\label{uniqueness}
	The generation satisfies $\textbf{1}^{\rm T}\underline p< \textbf{1}^{\rm T}D<\textbf{1}^{\rm T} \overline p $. 
\end{assumption}
For Assumption 1, the power generation cost $ h_i(p_i) $ is usually quadratic, which is convex and differentiable. 
For Assumption 2, it is a common assumption, otherwise the problem is infeasible.  
The Assumption 3 implies that the energy production can strictly satisfy the load demand. As long as the total regulation capacity of all prosumers is strictly larger than the total load demand, Assumption 3 holds. Assumption 3 is stricter than Assumption 2, which is used to determine the uniqueness of the GNE.

In summary, the energy sharing game among all prosumers is composed of the following elements:
\begin{itemize}
	\item Player: all prosumers, denoted by $ \mathcal{N}=\{1,2,...n\} $;
	\item Strategy: generation power $p_i$ and power interaction $ b_i $;
	\item Payoff: the disutility function $ f_i(p_i,b), \forall i\in\mathcal{N} $.
\end{itemize}
The GNE $ (p^*,b^*) $ of the game in \eqref{Sharing_problem} is defined as:
\begin{align*}
(p_i^*,b_i^*)\in \arg\min f_i(p_i,b_i,b_{-i}^*),\ s.t.\ \eqref{Sharing_problem2},\eqref{Sharing_problem2},\eqref{Sharing_problem3},\  \forall i\in\mathcal{N}
\end{align*}

\begin{remark}
	The energy sharing game model \eqref{Sharing_problem} basically comes from \cite{Chen2019An}, and is modified by additionally considering local capacity constraints \eqref{Sharing_problem4}. Mathematically, the energy sharing game \eqref{Sharing_problem} is indeed a GNG by noting that the global constraint  \eqref{Sharing_problem3} couples the strategies of players. As proved in \cite{Chen2019An}, if the constraint \eqref{Sharing_problem4} is not taken into account, the energy sharing game can be equivalently converted into a standard Nash game instead of a GNG without any coupling constraints. However,  when \eqref{Sharing_problem4} is considered, the coupling constraints cannot be eliminated. This is the key difference between this paper and \cite{Chen2019An} in the modeling. 
\end{remark}

\section{Equivalent Problem and Uniqueness of GNE}\label{Properties}
In this section, we analyze the property of the Nash equilibrium and transform the game into an equivalent optimization problem.

Replacing $\mu_c$ in \eqref{Sharing_problem} using \eqref{price}, then  the sharing game problem \eqref{Sharing_problem} can be rewritten as 
\begin{subequations}
	\label{Sharing_problem_II}     
	\begin{align}
	\min\limits_{p_i,b_i}\quad& f_i(p_i,b)= h_i(p_i)+g_i(b_i,b_{-i})
	\label{Sharing_problem1_II}
	\\ 
	\text{s.t.} \quad
	&  p_i+\frac{\textbf{1}^{\rm T}a-a_i}{\textbf{1}^{\rm T}a}b_i-\frac{a_i}{\textbf{1}^{\rm T}a}\sum\nolimits_{j\neq i}b_j=D_i
	\label{Sharing_problem2_II}
	\\
	\label{Sharing_problem4_II}
	&\underline p_i\le p_i\le \overline p_i  
	\end{align}
\end{subequations}
where 
\begin{align}
g_i(b_i,b_{-i})&= (a_i\mu_c(b)+b_i)\mu_c(b)\nonumber\\
&=-\left(\frac{\textbf{1}^{\rm T}a-a_i}{\textbf{1}^{\rm T}a}b_i-\frac{a_i}{\textbf{1}^{\rm T}a}\sum\limits_{j\neq i}b_j\right)\frac{b_i+\sum\nolimits_{j\neq i}b_j}{\textbf{1}^{\rm T}a}\nonumber\\
&=\frac{a_i-\textbf{1}^{\rm T}a}{N^2\overline a^2}b^2_i+\left(\frac{2a_i-\textbf{1}^{\rm T}a}{N^2\overline a^2}\right)b_i\sum\nolimits_{j\neq i}b_j \nonumber\\
&\quad+\frac{a_i}{N^2\overline a^2}\bigg(\sum\nolimits_{j\neq i}b_j\bigg)^2\nonumber\\
&=\alpha_ib^2_i+\beta_ib_i\sum\nolimits_{j\neq i}b_j +\delta_i\bigg(\sum\nolimits_{j\neq i}b_j\bigg)^2
\end{align} 
with 
\begin{align}\label{alpha_definition}
	\alpha_i=\frac{a_i-\textbf{1}^{\rm T}a}{N^2\overline a^2}, \beta_i=\frac{2a_i-\textbf{1}^{\rm T}a}{N^2\overline a^2},  \delta_i=\frac{a_i}{N^2\overline a^2}
\end{align}
Because $a_i<0$, we have  $ \alpha_i>0, \beta_i>0, \delta_i<0 $. Then, $ 	g_i(b_i,b_{-i}) $ is strictly convex with respect to $ b_i $ given any $ b_{-i} $. 
The objective function $ g_i(b_i,b_{-i}) $ is associated with all of other players. When solving \eqref{Sharing_problem_II} directly, it requires each player to be able to access all other players’ decisions, i.e., full information is needed. This is very difficult to implement in practice. In the rest of this work, we give a different perspective by solving an equivalent optimization problem of \eqref{Sharing_problem_II}. 

The pseudo-gradient of $ f_i(p_i,b), \forall i\in\mathcal{N} $, denoted by $ F(p,b)$, is defined as 
\begin{align}\label{pseudo_gradient}
F(p,b)=col\left(\nabla_{p_1,b_1}f_1(p_1,b),\cdots,\nabla_{p_N,b_N}f_N(p_N, b)\right) 
\end{align}
where $ col\left(x_1,\cdots,x_n\right) $ is the column vector stacked with column vectors $ x_1,\cdots,x_n $.

Define the sets 
\begin{equation}
\Omega_i:=\left\{p_i\ |\ \underline p_i\le p_i\le \overline p_i\right\},\ \Omega=\prod\nolimits_{i=1}^n\Omega_i
\end{equation}
and 
\begin{equation}
X^e_i:=\left\{(p_i, b_i)\ |\ p_i+\frac{\textbf{1}^{\rm T}a-a_i}{\textbf{1}^{\rm T}a}b_i-\frac{a_i}{\textbf{1}^{\rm T}a}\sum\nolimits_{j\neq i}b_j=D_i\right\} 
\end{equation}
\begin{equation}
X^e=\prod\nolimits_{i=1}^nX^e_i, \ X_i=\Omega_i\cap X_i^e, \ X=\Omega\cap X^e
\end{equation}

For any given $ b_{-i} $, the feasible set of prosumer $ i $, $ X_i$, is closed and convex.  Then, we have the following result.
\begin{lemma}\label{lemma_VI}
	If assumptions \ref{convex} and \ref{Slater} hold, a point $ (p, b) $ is an equilibrium if and only if it is a solution
	of the variational inequality $VI(X, {F}(p, b))$\footnote{The variational inequality problem  $VI(X, {F}(x))$ is to find a vector $\overline{x} \in X$ such that
		$(y-\overline{x})^{T} {F}(\overline{x}) \geq 0$ for all $y \in X $\cite{facchinei2010generalized}.}.
\end{lemma}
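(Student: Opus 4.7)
The plan is to invoke the standard KKT-based argument for the convex equivalent reformulation of a GNE, adapted to the linear coupling in the sharing-market constraints.

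First, I would verify convexity of each player's subproblem. By Assumption \ref{convex}, $h_i$ is convex; and since $\alpha_i>0$ from \eqref{alpha_definition}, $g_i(\cdot,b_{-i})$ is strictly convex in $b_i$, so $f_i$ is convex in the own decision $(p_i,b_i)$ for any fixed $b_{-i}$. The set $X_i=\Omega_i\cap X_i^e$ is closed and convex, being the intersection of a box with an affine hyperplane in $(p_i,b_i)$. Assumption \ref{Slater} furnishes Slater's constraint qualification, so the KKT system is both necessary and sufficient for player $i$'s optimality.

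Next I would rewrite the NE condition in first-order form. By convex optimality, $(p^*,b^*)$ is a GNE if and only if $(p^*,b^*)\in X$ and, for every $i$,
\begin{equation*}
\langle\nabla_{(p_i,b_i)}f_i(p^*,b^*),\,(p_i,b_i)-(p_i^*,b_i^*)\rangle \geq 0,\ \forall (p_i,b_i)\in X_i(b_{-i}^*).
\end{equation*}
The joint $VI(X,F)$ condition, on the other hand, reads $\langle F(p^*,b^*),(p,b)-(p^*,b^*)\rangle\geq 0$ for all $(p,b)\in X$.

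To bridge the two, I would pass through the normal-cone reformulation. Each per-player KKT produces a Lagrange multiplier $\lambda_i^*$ for the equality constraint together with $\xi_i^*\in N_{\Omega_i}(p_i^*)$. Stacking $\lambda^*=(\lambda_i^*)$ identifies an element of $N_{X^e}(p^*,b^*)$, since the gradients of the $n$ affine equalities defining $X^e$ span precisely this normal subspace. Combined with $\xi^*\in N_\Omega(p^*)$, this yields $-F(p^*,b^*)\in N_\Omega(p^*)+N_{X^e}(p^*,b^*)=N_X(p^*,b^*)$, which is exactly the normal-cone form of $VI(X,F)$. The converse is obtained by reversing the correspondence, extracting the per-player multipliers from the joint KKT block-by-block.

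The key technical hurdle I anticipate is that the $b_i$-stationarity of player $i$'s own subproblem involves only the diagonal coefficient $\tfrac{\textbf{1}^{\rm T}a-a_i}{\textbf{1}^{\rm T}a}$, whereas the $b_i$-stationarity of the joint VI aggregates contributions from every other player through the off-diagonal coefficients $-\tfrac{a_k}{\textbf{1}^{\rm T}a}$ coming from the market-clearing expression \eqref{price}. Reconciling these two expressions -- that is, showing that the per-player multipliers can be simultaneously collected into a single normal-cone element of $X^e$ -- relies on the specific algebraic structure induced by \eqref{price}, and that identification is where the main work of the proof will reside.
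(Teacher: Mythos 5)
Your proposal stops exactly where the proof has to begin. The paper's own argument is a short verification-plus-citation: it checks that for each fixed $b_{-i}$ the map $(p_i,b_i)\mapsto f_i(p_i,b)$ is continuous, differentiable and convex and that $X_i$ is closed and convex, and then invokes \cite[Theorem~3.3]{facchinei2010generalized} wholesale for the equivalence between equilibria and solutions of $VI(X,F(p,b))$. You instead set out to re-derive that equivalence by hand through per-player KKT systems and a normal-cone identity, which is a legitimate alternative route, but you explicitly defer the only nontrivial step (``that identification is where the main work of the proof will reside''). As written, the proposal is a plan, not a proof.

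The deferred step is also not a routine verification, and one of your intermediate claims glosses over exactly where it breaks. Player $i$'s own KKT condition attaches a multiplier $\lambda_i$ only to the \emph{own-block} gradient $\bigl(1,\tfrac{\mathbf{1}^{\rm T}a-a_i}{\mathbf{1}^{\rm T}a}\bigr)$ of its equality constraint, whereas an element of $N_{X^e}(p^*,b^*)$ is a combination $\sum_j\lambda_j\,\nabla_{(p,b)}(\text{constraint }j)$ of the \emph{full} constraint gradients, whose cross-block entries $-a_j/\mathbf{1}^{\rm T}a$ land in every other player's $b$-component. So ``stacking $\lambda^*=(\lambda_i^*)$ identifies an element of $N_{X^e}(p^*,b^*)$'' is not immediate: the two multiplier systems differ by precisely those cross terms, and reconciling them is the classical distinction between an arbitrary GNE and a variational (normalized) equilibrium --- in general only the latter solve $VI(X,F)$, so the ``only if'' direction is exactly what your argument must establish and does not. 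To complete the proof you must either carry out this multiplier bookkeeping for the specific affine structure induced by \eqref{price} (showing the correspondence is invertible in both directions), or do what the paper does and verify the hypotheses of a standard equivalence theorem and cite it.
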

\begin{proof}
	For any given $ b_{-i} $, the disutility function \eqref{Sharing_problem1_II} is continuous, differentiable and convex with respect to $ (p_i, b_{i}) $. Moreover, the feasible set $ X_i $ is closed and convex. Thus, we have this assertion by \cite[Theorem 3.3]{facchinei2010generalized}. 
\end{proof}

For the existence and uniqueness of the GNE, we have the following results. 
\begin{theorem}
	\label{unique}
	If Assumption \ref{convex} and Assumption \ref{Slater} hold, for the generalized Nash game \eqref{Sharing_problem_II}, we have
	\begin{enumerate}
		\item the generalized Nash equilibrium exists;
		\item if the Assumption \ref{uniqueness} also holds, the generalized Nash equilibrium is unique.
	\end{enumerate}
\end{theorem}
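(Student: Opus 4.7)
The plan is to work from the variational-inequality characterization in Lemma~\ref{lemma_VI}: since a GNE is precisely a solution of $VI(X, F)$, both claims reduce to statements about this VI. First I would verify that $X = \Omega \cap X^e$ is nonempty, closed, and convex. Closedness and convexity are immediate because $\Omega$ is a box and $X^e$ is a finite intersection of hyperplanes. Nonemptiness follows from Assumption~\ref{Slater} together with the observation, obtained by summing \eqref{Sharing_problem2_II} over $i\in\mathcal{N}$, that the equality constraints are consistent iff $\mathbf{1}^T p = \mathbf{1}^T D$, which does admit a solution $p\in\Omega$ (and from which a compatible $b$ can be recovered by solving the remaining linear system). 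Since $F$ is continuous (each $h_i$ is differentiable by Assumption~\ref{convex}, and $g_i$ is polynomial), a standard VI existence result then yields part~1, provided one handles the fact that $X$ is unbounded along the recession direction $\{0\}\times\mathrm{span}\{a\}$. I would either invoke a recession/coercivity theorem from \cite{facchinei2010generalized}, or project $X$ onto the hyperplane orthogonal to this redundant direction and work on the resulting compact slice.

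For the uniqueness claim under Assumption~\ref{uniqueness}, my strategy is to assume two GNEs $(p^1,b^1)\neq(p^2,b^2)$, write the VI inequality at each, test the first against the second and vice versa, and add to obtain $\langle F(p^1,b^1)-F(p^2,b^2),(p^1-p^2,b^1-b^2)\rangle\le 0$. Using the explicit formulas in \eqref{alpha_definition}, the Jacobian of the $b$-component of $F$ has the form $cI+\beta\mathbf{1}^T$ with $c=-\mathbf{1}^T a/(N^2\bar a^2)>0$ (where $\beta=(\beta_1,\dots,\beta_n)^T$). Together with convexity of each $h_i$ on $\Omega_i$, I would show that the symmetric part of this Jacobian, restricted to the tangent space $\{(\delta p,\delta b):\delta p + M\delta b=0\}$ of the coupled equality constraints, is positive definite. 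Strict feasibility granted by Assumption~\ref{uniqueness} is what eliminates degeneracy along $\mathrm{span}\{a\}$ and makes the Lagrange multipliers in \eqref{Lagrangian} well-posed, so the above inequality forces $(p^1,b^1)=(p^2,b^2)$.

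The hard part will be establishing the strict monotonicity of $F$ on the correct tangent space. The $b$-Jacobian $cI+\beta\mathbf{1}^T$ is non-symmetric, and its symmetric part $cI+\tfrac{1}{2}(\beta\mathbf{1}^T+\mathbf{1}\beta^T)$ is only obviously positive semidefinite; ruling out the zero direction requires careful use of the sign constraints $a_i<0$ and of the rank-one structure, together with the transversality of $\mathrm{span}\{a\}$ (the kernel of the constraint operator $M$) to the equality-constraint manifold under Assumption~\ref{uniqueness}. Once this quantitative monotonicity estimate is in place, uniqueness drops out of the standard monotone-VI argument, and the same estimate supplies the coercivity needed to dispatch the existence part.
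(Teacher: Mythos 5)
Your route is genuinely different from the paper's. The paper never analyzes the variational inequality head-on: it writes the KKT system of $VI(X,F)$, eliminates the multipliers $\lambda_i^*$ and then the variables $b_i^*$ via $b_i^*=D_i-p_i^*-a_i\mu_c^*$ (equation \eqref{EQb}), and observes that what remains is exactly the KKT system of the strongly convex program \eqref{Central_problem_II} in $p$ alone, whose feasible set is compact. Existence is then immediate, uniqueness of $p^*$ follows from strong convexity of $\tilde h$, and uniqueness of $b^*$ follows from uniqueness of the multiplier $\mu_c^*$ --- which is precisely where Assumption \ref{uniqueness} enters: it forces some $p_i^*$ strictly inside $[\underline p_i,\overline p_i]$, so $N_{\Omega_i}(p_i^*)=\{0\}$ and \eqref{KKT_game1e} pins $\mu_c^*$ down. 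Note that this reduction is not merely a proof device: \eqref{Central_problem_II} is the problem the distributed algorithm actually solves, so the paper's proof simultaneously establishes the equivalence the rest of the paper depends on; your direct VI argument, even if completed, would not yield that by-product.

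On the substance of your plan there are two concrete soft spots. First, for existence, the ``compact slice'' fallback is not valid as stated: a solution of the VI restricted to a slice transverse to the recession direction $(0,a)$ need not solve the VI on the full unbounded set, because both $\pm(0,a)$ are feasible directions and one must additionally establish $\langle F(x^*),(0,a)\rangle=0$. The coercivity route does work, but only after a computation you omit: along $b\mapsto b+ta$ the $b$-part of $F$ is affine in $t$ with leading quadratic form $a^{\rm T}(cI+\beta\mathbf{1}^{\rm T})a=\bigl(\|a\|^2-(\mathbf{1}^{\rm T}a)^2\bigr)/\mathbf{1}^{\rm T}a>0$ for $n\ge 2$ (both factors are negative), which supplies the needed growth. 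Second, for uniqueness, the symmetric part of $cI+\beta\mathbf{1}^{\rm T}$ is in fact positive definite on all of $\mathbb{R}^n$, not merely on a tangent space: restricting to ${\rm span}\{a,\mathbf{1}\}$, the relevant $2\times 2$ determinant reduces via Cauchy--Schwarz to a positive multiple of $n\|a\|^2-(\mathbf{1}^{\rm T}a)^2\ge 0$, strict unless $a\parallel\mathbf{1}$, and that degenerate case is a one-line check. Consequently your monotone-VI argument yields uniqueness of the variational solution \emph{without} Assumption \ref{uniqueness} at all, so your claim that strict feasibility is what ``eliminates degeneracy along ${\rm span}\{a\}$'' does not correspond to where that assumption actually does work; it is needed in the paper only because there $b^*$ is recovered from $\mu_c^*$, whose uniqueness can fail when every $p_i^*$ sits on a bound. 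You should resolve this mismatch before writing the proof up: either your argument proves a strictly stronger statement (if Lemma \ref{lemma_VI} is taken as a true equivalence), or --- as is standard for jointly convex generalized games --- the VI captures only the variational equilibria, and the scope of what ``uniqueness'' means must be stated explicitly.
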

The detailed proof of Theorem \ref{unique} is given in Appendix \ref{Appendix_Unique}. From the proof of Theorem \ref{unique}, an equivalent optimization problem is obtained, which is
\begin{subequations}
	\label{Central_problem_II}     
	\setlength{\abovedisplayskip}{4pt}	
	\setlength{\belowdisplayskip}{4pt}
	\begin{align}
	\min\limits_{p}\ & \hat h(p)=\sum\limits_{i\in\mathcal{N}} \left(h_i(p_i)+ \frac{p_i^2}{2(a_i-\textbf{1}^{\rm T}a)}-\frac{D_i}{a_i-\textbf{1}^{\rm T}a}p_i\right)
	\label{Central_problem1_II}
	\\ 
	\text{s.t.} \quad
	&  \sum\nolimits_{i\in\mathcal{N}} p_i =\sum\nolimits_{i\in\mathcal{N}} D_i,\quad \mu_c
	\label{Central_problem2_II}
	\\
	\label{Central_problem3_II}
	&\underline p_i\le p_i\le \overline p_i  
	\end{align}
\end{subequations}
where $\mu_c$ is the Lagrangian multiplier of constraint \eqref{Central_problem2_II}. The ``equivalent" here means that we can first solve \eqref{Central_problem_II} to get $p^*, \mu_c^*$, and then get $b^*$ with $ b_i^{*}=D_i-p^{*}_i-a_i\mu_c^* $. Finally, the GNE is obtained.

\begin{remark}\label{former_work}
	The game \eqref{Sharing_problem_II} is difficult to solve by existing methods due to the full information in the objective function. Now, we can alternatively consider the equivalent counterpart \eqref{Central_problem_II}, which could be solved in a distributed way as long as $ \textbf{1}^{\rm T}a $ is known. In this regard, the market coordinator (or a third-part platform) is required to broadcast the sum of $ a_i $ of all prosumers. That is needed merely when new prosumers join or existing ones quit, which can be known by the market coordinator or the third-part platform. After getting $p_i^*$, $ b_i^* $ can be obtained from $ b_i^{*}=D_i-p^{*}_i-a_i\mu_c^* $. Then, the energy sharing game problem \eqref{Sharing_problem_II} can be solved. 
\end{remark}


\section{Distributed Algorithm for Equilibrium Seeking}\label{Algorithm}

In this section, we first propose a  distributed algorithm based on Krasnosel'ski{\v{i}}-Mann iteration to solve the problem \eqref{Central_problem_II}, i.e., to solve the generalized game \eqref{Sharing_problem_II}. Then, we use the nonexpansive operator theory to prove the convergence of the proposed algorithm.

\subsection{Algorithm design}
Before giving the algorithm, we first define a matrix and a function. 
Define the matrix
\begin{equation}\label{Theta}
\Theta:=\left[ {\begin{array}{*{20}{c}}
	\Gamma &0 &-I_n\\
	0& \alpha_{z}^{-1}I_{n}& -L\\
	-I_n & -L &\alpha_\mu^{-1}I_n
	\end{array}} \right]
\end{equation}
where the matrix $ \Gamma=\diag(\gamma_i) $. 
$\gamma_i$, $\alpha_{z}$ and $\alpha_\mu$ are constant to make $\Theta$ positive definite\footnote{It should be note that such $ \gamma_i, \sigma_{z}, \sigma_{\mu} $ always exist to make $ \Theta $ diagonally dominant, or positive definite. }. 

Define the function 
\begin{equation}
H_i(p_i)=\frac{\partial \tilde{h}_i}{\partial p_i}(p_i)+\gamma_ip_i
\end{equation}
where $ \tilde{h}_i(p_i)= h_i(p_i)+ \frac{p_i^2}{2(a_i-\textbf{1}^{\rm T}a)}-\frac{D_i}{a_i-\textbf{1}^{\rm T}a}p_i $. We have the follow result.
\begin{lemma}
	The function $ H_i(p_i) $ is strictly monotone. Moreover, its inverse function $ H_i^{-1}(p_i) $ exists and is also strictly monotone.
\end{lemma}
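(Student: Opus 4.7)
The plan is to recognize $H_i$ as the derivative of a strictly convex $C^1$ function on $\mathbb{R}$, from which strict monotonicity follows immediately, and then to combine continuity with coercivity to obtain a bijection whose inverse is strictly monotone. The only genuinely game-specific step is pinning down the sign of $a_i - \mathbf{1}^{\mathrm{T}} a$; everything else is standard convex analysis.

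First I would collect the terms of $H_i$ explicitly,
\[
H_i(p_i) = h_i'(p_i) + \Bigl(\gamma_i + \tfrac{1}{a_i - \mathbf{1}^{\mathrm{T}} a}\Bigr) p_i - \tfrac{D_i}{a_i - \mathbf{1}^{\mathrm{T}} a},
\]
and determine the sign of the coefficient multiplying $p_i$. Because each price-elasticity coefficient satisfies $a_j < 0$, we have $a_i - \mathbf{1}^{\mathrm{T}} a = -\sum_{j \neq i} a_j > 0$, so $\tfrac{1}{a_i - \mathbf{1}^{\mathrm{T}} a} > 0$. Since $\gamma_i$ is chosen to make the diagonal of $\Theta$ in \eqref{Theta} positive (hence $\gamma_i > 0$), the coefficient $\kappa_i := \gamma_i + \tfrac{1}{a_i - \mathbf{1}^{\mathrm{T}} a}$ is strictly positive. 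Combined with convexity of $h_i$ from Assumption~\ref{convex} (which gives $(h_i'(p) - h_i'(q))(p-q) \geq 0$), this yields
\[
\bigl(H_i(p) - H_i(q)\bigr)(p - q) = \bigl(h_i'(p) - h_i'(q)\bigr)(p-q) + \kappa_i (p-q)^2 > 0
\]
for every $p \neq q$, which is the required strict monotonicity of $H_i$.

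For the inverse, I would note that $h_i'$ is continuous (a differentiable convex function on an interval is automatically $C^1$), so $H_i$ is continuous. The explicit linear term $\kappa_i p_i$ with $\kappa_i > 0$ drives $H_i(p_i) \to \pm\infty$ as $p_i \to \pm\infty$, since $h_i'$ is nondecreasing and therefore cannot cancel it. A continuous strictly increasing surjection $\mathbb{R} \to \mathbb{R}$ is a bijection whose inverse is continuous and strictly increasing, and interchanging $p$ and $q$ in the inequality above immediately gives strict monotonicity of $H_i^{-1}$. The only step that requires attention is the sign argument for $a_i - \mathbf{1}^{\mathrm{T}} a$, which in turn relies on the physical fact $a_j < 0$ for all $j$ noted after \eqref{alpha_definition}; the rest of the proof is a direct application of standard results.
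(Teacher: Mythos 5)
Your proof is correct and follows the same route the paper intends: the paper simply asserts the result is ``straightforward as $\tilde h_i$ is strongly convex,'' and your argument supplies exactly the missing details, in particular the sign check $a_i-\mathbf{1}^{\mathrm{T}}a=-\sum_{j\neq i}a_j>0$ that makes $\tilde h_i$ strongly convex and hence $H_i$ strongly (so strictly) monotone. The added coercivity/surjectivity step to justify that $H_i^{-1}$ is defined on all of $\mathbb{R}$ is a sensible completion of what the paper leaves implicit.
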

The proof is straightforward as $ \tilde{h}_i(p_i) $ is strongly convex, which is omitted here. 

We propose the following algorithm, which is denoted by SGNE (\underline{S}eeking the \underline{G}eneralized \underline{N}ash \underline{E}quilibrium).

\begin{algorithm}[t]
	\caption{\textit{SGNE}}
	\label{algorithm0}
	%
	\begin{subequations}\label{inertia_algorithm} 
		\begin{flushleft}
			\textbf{Prediction phase}:
		\end{flushleft}
		For prosumer $ i $, it computes
		\begin{align}
		\tilde{p}_{i,t}&={p}_{i,t}+\eta({p}_{i,t}-{p}_{i,t-1}) \\		
		\tilde{z}_{i,t}&={z}_{i,t}+\eta({z}_{i,t}-{z}_{i,t-1})\\		
		\tilde{\mu}_{i,t}&={\mu}_{i,t}+\eta({\mu}_{i,t}-{\mu}_{i,t-1})
		\end{align}	
		\begin{flushleft}
			\textbf{Update phase}:
		\end{flushleft}
		For prosumer $ i $, it computes  
		\begin{equation}
		\label{algorithm5}
		p_{i,t+1}= \left[H^{-1}(\gamma_i\tilde p_{i,t}-\tilde\mu_{i,t})\right]_{\underline p_i}^{\overline p_i}
		\end{equation}		
		Communicate with its neighbors $ j\in N_i $ to get $ \tilde\mu_{j,t} $, and compute
		\begin{equation}
		\label{algorithm6}						
		z_{i,t+1}=\tilde z_{i,t}-\sigma_{z} \sum\nolimits_{j\in {N}_i} (\tilde\mu_{i,t}-\tilde\mu_{j,t})
		\end{equation}
		Communicate with its neighbors $ j\in N_i $ to get $ \tilde z_{j,t}, z_{j,t+1} $, and compute
		\begin{align}
		\label{algorithm7}
		&\mu_{i,t+1}=\tilde \mu_{i,t}+\sigma_{\mu}\bigg( 2p_{i,t+1}-\tilde p_{i,t} - D_i  \nonumber\\
		&\quad +2\sum\limits_{j\in {N}_i} (z_{i,t+1}-z_{j,t+1})-  \sum\limits_{j\in {N}_i} (\tilde z_{i,t}-\tilde z_{j,t})  \bigg) 
		\end{align}
	\end{subequations}
\end{algorithm}
The algorithm has the Krasnosel'ski{\v{i}}-Mann iteration type \cite[Chapter 5]{bauschke2011convex}, which consists of two phases: prediction phase and update phase. In the prediction phase, each bus uses the local stored information of the last two steps to get predictive variables by a linear extrapolation. In the update phase, the predictive variables are utilized to proceed the next iteration. In the algorithm, only communications with neighbors are needed, which means that it is fully distributed.

\subsection{Algorithm reformulation}
From \eqref{algorithm5}, we know 
\begin{equation}
H^{-1}(\gamma_i\tilde p_{i,t}-\tilde\mu_{i,t})  \left\{ \begin{array}{l}
\le p_{i,t+1}, \ p_{i,t+1}= \underline p_i\\
=p_{i,t+1},\ \underline p_i< p_{i,t+1}< \overline p_i  \\
\ge p_{i,t+1},\ p_{i,t+1}= \overline p_i
\end{array} \right.
\end{equation}
Because $ H_i^{-1}(p_i) $ is monotone, we have 
\begin{equation}
H(p_{i,t+1})-\gamma_i\tilde p_{i,t}+\tilde\mu_{i,t}  \left\{ \begin{array}{l}
\ge 0, \ p_{i,t+1}= \underline p_i\\
=0,\ \underline p_i< p_{i,t+1}< \overline p_i  \\
\le 0,\ p_{i,t+1}= \overline p_i
\end{array} \right.
\end{equation}
It is equivalent to
\begin{align}
p_{i,t+1}&=\mathcal{P}_{\Omega_i}\left(p_{i,t+1}-\alpha_{p}(\nabla\tilde h_i(p_{i,t+1})\right. \nonumber\\
&\left.\qquad\qquad\qquad+\gamma_i(p_{i,t+1}-\tilde p_{i,t})+\tilde\mu_{i,t}\right)
\end{align}
Recalling $\mathcal{P}_{\Omega}(x)=({\rm{Id}}+N_{\Omega})^{-1}(x)$, we have 
\begin{align}
\gamma_i(\tilde p_{i,t}-p_{i,t+1})&-\tilde\mu_{i,t}+\mu_{i,t+1}\in N_{\Omega_i}(p_{i,t+1}) \nonumber\\
&+\nabla\tilde h_i(p_{i,t+1})+ \mu_{i,t+1}
\end{align}
From \eqref{algorithm6}, we have 
\begin{equation}
\begin{split}
&\sigma^{-1}_{z}(\tilde z_{i,t}-z_{i,t+1}) -\sum\nolimits_{j\in {N}_i} (\tilde\mu_{i,t}-\tilde\mu_{j,t}) \\
&\quad+\sum\limits_{j\in {N}_i} (\mu_{i,t+1}-\mu_{j,t+1}) = \sum\limits_{j\in {N}_i} (\mu_{i,t+1}-\mu_{j,t+1})
\end{split}
\end{equation}
From \eqref{algorithm7}, we have 
\begin{align}
&\sigma_{\mu}^{-1}(\tilde \mu_{i,t}-\mu_{i,t+1})+ \sum\nolimits_{j\in {N}_i} ( z_{i,t+1}- z_{j,t+1}) \nonumber \\ 
&\quad - \sum\nolimits_{j\in {N}_i} (\tilde z_{i,t}-\tilde z_{j,t}) + p_{i,t+1}-\tilde p_{i,t} \nonumber \\
&\quad   =- p_{i,t+1} + D_i-\sum\nolimits_{j\in {N}_i} ( z_{i,t+1}- z_{j,t+1})
\end{align}

The compact form of the algorithm is 
\begin{subequations}\label{compact_algorithm}
	\begin{align}
	&\tilde{p}_t={p}_t+\eta({p}_t-{p}_{t-1})\\
	&\tilde{z}_t={z}_t+\eta({z}_t-{z}_{t-1})\\
	&\tilde{\mu}_t={\mu}_t+\eta({\mu}_t-{\mu}_{t-1})\\
	&\Gamma(\tilde p_{t}-p_{t+1})-(\tilde\mu_{t}-\mu_{t+1})\in N_{\Omega}(p_{t+1})\nonumber\\
	&\qquad\qquad\qquad\qquad\qquad+\nabla\tilde h(p_{t+1})+ \mu_{t+1}\\
	&\sigma^{-1}_{z}(\tilde z_{t}-z_{t+1})-L(\tilde\mu_t-\mu_{t+1})=L\mu_{t+1}\\
	&\sigma^{-1}_{\mu}(\tilde \mu_{t}-\mu_{t+1})-L(\tilde z_{t}-z_{t+1})\nonumber\\
	&\qquad\qquad-(\tilde p_t-p_{t+1})=-p_{t+1}+D-Lz_{t+1}
	\end{align}
\end{subequations}

Define the following operator
\begin{equation}
\label{OperU}
\mathcal{U}:\left[ {\begin{array}{*{20}{c}}
	{p}\\
	{{z}}\\
	{{\mu}}
	\end{array}} \right] \mapsto \left[ {\begin{array}{*{20}{c}}
	\mu + N_{\Omega}(p) +\nabla\tilde h(p) \\
	L\mu\\
	-p+D-Lz
	\end{array}} \right]
\end{equation}
Then, the algorithm \eqref{compact_algorithm} is rewritten as
\begin{subequations}\label{compact2}
	\begin{align}\label{compact_e1}
	&\tilde\omega_t=\omega_t+\eta(\omega_t-\omega_{t-1})\\
	\label{compact_e2}
	&\Theta(\tilde\omega_t-\omega_{t+1})\in\mathcal{U} (\omega_{t+1})
	\end{align}
\end{subequations}
where $\Theta$ is defined in \eqref{Theta}.
For the operator $\mathcal{U}$, it has the following properties
\begin{lemma}\label{Lemma1}
	Take step sizes  $\Gamma$, $\alpha_{z}$ and $\alpha_\mu$ such that $\Theta$ is positive definite We have following properties. 
	\begin{enumerate} 
		\item Operator $\mathcal{U}$ is maximally monotone;
		\item $\Theta^{-1}\mathcal{U}$ is maximally monotone under the $\Theta$-induced norm $\|\cdot\|_{\Theta}$;
		\item $({\rm{Id}}+\Theta^{-1}\mathcal{U})^{-1}$ exists and is firmly nonexpansive.
	\end{enumerate}
\end{lemma}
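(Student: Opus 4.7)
The plan is to attack the three parts in sequence, leveraging the splitting of $\mathcal{U}$ into a ``subdifferential/gradient'' block and a skew-symmetric linear block.

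For part (1), I would write $\mathcal{U} = \mathcal{A} + \mathcal{B}$, where
\[
\mathcal{A}: (p,z,\mu) \mapsto \bigl(N_{\Omega}(p)+\nabla\tilde h(p),\ 0,\ 0\bigr),\qquad
\mathcal{B}: (p,z,\mu) \mapsto \bigl(\mu,\ L\mu,\ -p - Lz + D\bigr).
\]
The operator $\mathcal{A}$ is maximally monotone because $N_\Omega$ is the normal cone of the nonempty closed convex set $\Omega$ (maximally monotone) and $\nabla \tilde h$ is the gradient of a (strongly) convex differentiable function, with full domain, so the sum is maximally monotone by standard results (e.g., \cite[Cor.~24.4]{bauschke2011convex}). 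For $\mathcal{B}$, note its linear part corresponds to the matrix
\[
M=\begin{pmatrix} 0 & 0 & I_n \\ 0 & 0 & L \\ -I_n & -L & 0 \end{pmatrix},
\]
which is skew-symmetric (using $L = L^{\mathrm T}$). Hence $\langle \omega, M\omega\rangle = 0$ for every $\omega$, so $\mathcal{B}$ is monotone, and being single-valued with full domain it is maximally monotone. Since $\operatorname{dom}\mathcal{B}=\mathbb{R}^{3n}$, the sum rule applies and $\mathcal{U}=\mathcal{A}+\mathcal{B}$ is maximally monotone.

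For part (2), the key observation is that, under the $\Theta$-induced inner product $\langle \cdot,\cdot\rangle_\Theta$,
\[
\langle x-y,\ \Theta^{-1}\mathcal{U}(x)-\Theta^{-1}\mathcal{U}(y)\rangle_{\Theta}
=\langle \Theta(x-y),\ \Theta^{-1}(\mathcal{U}(x)-\mathcal{U}(y))\rangle
=\langle x-y,\ \mathcal{U}(x)-\mathcal{U}(y)\rangle,
\]
so monotonicity of $\mathcal{U}$ in the Euclidean inner product is equivalent to monotonicity of $\Theta^{-1}\mathcal{U}$ in the $\Theta$-inner product. Maximality transfers the same way: if $(x,u)$ monotonically related to the graph of $\Theta^{-1}\mathcal{U}$ under $\langle\cdot,\cdot\rangle_\Theta$, then $(x,\Theta u)$ is monotonically related to $\operatorname{gra}\mathcal{U}$ under the Euclidean inner product, so by maximality from part (1) we obtain $\Theta u \in \mathcal{U}(x)$, i.e.\ $u\in\Theta^{-1}\mathcal{U}(x)$. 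This requires $\Theta$ to be symmetric positive definite, which is guaranteed by the choice of step sizes.

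For part (3), I would invoke Minty's theorem: since $\Theta^{-1}\mathcal{U}$ is maximally monotone in the Hilbert space $(\mathbb{R}^{3n},\langle\cdot,\cdot\rangle_\Theta)$, the resolvent $(\mathrm{Id}+\Theta^{-1}\mathcal{U})^{-1}$ is single-valued with full domain and firmly nonexpansive in the same norm (\cite[Prop.~23.7 and Cor.~23.8]{bauschke2011convex}). That firm nonexpansiveness is precisely what we need, because the compact iteration \eqref{compact2} writes as $\omega_{t+1}=(\mathrm{Id}+\Theta^{-1}\mathcal{U})^{-1}(\tilde\omega_t)$, setting the stage for Krasnosel'ski{\v{i}}-Mann convergence.

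The main obstacle I anticipate is the bookkeeping in part (1)---writing $\mathcal{B}$ in a form that exposes its skew-symmetric matrix and then correctly invoking a sum rule for maximal monotonicity that tolerates the fact that $N_\Omega$ has only the domain $\Omega$. This is handled by noting that $\mathcal{B}$ has full domain, so the classical sum rule (no interior/transversality hypothesis beyond $\operatorname{dom}\mathcal{A}\cap\operatorname{int}\operatorname{dom}\mathcal{B}\neq\emptyset$) applies directly. Parts (2) and (3) are then essentially mechanical translations through the $\Theta$-metric and an appeal to Minty.
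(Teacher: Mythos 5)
Your proof is correct and follows essentially the same route as the paper: the same splitting of $\mathcal{U}$ into a skew-symmetric linear part plus a normal-cone/gradient part (you merely place the constant $D$ in the linear block rather than the subdifferential block, which changes nothing), the same sum rule for maximal monotonicity, the same transfer of (maximal) monotonicity through the $\Theta$-inner product, and the same appeal to Minty's theorem for the resolvent. If anything, your part (2) spells out the maximality transfer more explicitly than the paper, which only cites an analogous lemma elsewhere.
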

\begin{proof}
	1): The operator $\mathcal{U}$ can be rewritten as 
	\begin{eqnarray}
	\mathcal{U}&=&\left[ {\begin{array}{*{20}{c}}
		0 &0 &I\\
		0& 0& L\\
		-I & -L &0
		\end{array}} \right]\left[ {\begin{array}{*{20}{c}}
		{p}\\
		{{z}}\\
		{{\mu}}
		\end{array}} \right]+\left[ {\begin{array}{*{20}{c}}
		N_{\Omega}(p) +\nabla\tilde h(p)\\
		0\\
		D
		\end{array}} \right]\nonumber\\
	&=&\mathcal{U}_1+\mathcal{U}_2
	\end{eqnarray}
	As $\mathcal{U}_1$ is a skew-symmetric matrix, $\mathcal{U}_1$ is maximally monotone \cite[Example 20.35]{bauschke2011convex}. Moreover, $N_{\Omega}(\textbf{z})$ is maximally monotone by \cite[Example 20.26]{bauschke2011convex}. In addition, $0$ and $ D $ are monotone and continuous, which are all maximally monotone \cite[Corollary 20.28]{bauschke2011convex}. As $ \tilde h(p) $ is convex, $ \nabla\tilde h(p) $ is also maximally monotone \cite[Theorem 20.25]{bauschke2011convex}. Thus, $\mathcal{U}_2$ is also maximally monotone. Then, by \cite[Corollary 25.5]{bauschke2011convex},we have $\mathcal{U}$ is maximally monotone.
	
	2) As $\Theta$ is symmetric positive definite and $\mathcal{U}$ is maximally monotone, we can prove that $\Theta^{-1}\mathcal{U}$ is maximally monotone under the $\Theta$-induced norm by the similar analysis in Lemma 5.6 of \cite{yi2017distributed}.  
	
	3) As $\Phi^{-1}\mathcal{U}$ is maximally monotone, $({\rm{Id}}+\Phi^{-1}\mathcal{U})^{-1}$ exists and is firmly nonexpansive by \cite[Proposition 23.8]{bauschke2011convex}. 	
\end{proof}

By the third assertion of Lemma \ref{Lemma1}, \eqref{compact2} is equivalent to 
\begin{subequations}\label{compact_f}
	\begin{align}\label{compact_f1}
	&\tilde\omega_t=\omega_t+\eta(\omega_t-\omega_{t-1})\\
	\label{compact_f2}
	&\omega_{t+1}=(\rm Id+\Theta^{-1}\mathcal{U})^{-1}\tilde \omega
	\end{align}
\end{subequations}
Up to now, we transform the algorithm SGNE to a fixed point problem with a nonexpansive operator $ (\rm Id+\Theta^{-1}\mathcal{U})^{-1} $, which provides fundamental support for the convergence proof.

\section{Nash Equilibrium Seeking and Convergence}\label{Convergence}
In this section, we address the optimality of the equilibrium point and the convergence of the algorithm SGNE, i.e., the discrete dynamic system \eqref{inertia_algorithm}.
\subsection{Nash Equilibrium}
First, we define the equilibrium of the algorithm SGNE.
\begin{definition}
	\label{def:ep.1}
	A point $w^*=(w^*_i, i\in\mathcal{N})=(p^*_i, z^*_i, \mu^{*}_i)$ 
	is an {equilibrium point} of \eqref{inertia_algorithm} if $\lim\limits_{t\rightarrow +\infty} w_{i,t}=w_i^*$ holds for all $i$.
\end{definition}
Recalling the \eqref{EQb}, and we have the following result.
\begin{theorem}\label{Equivalence}
	Suppose assumptions \ref{convex}, \ref{Slater} and \ref{uniqueness} hold. At the equilibrium of the algorithm SGNE, we have $ \mu^{*}_i=\mu^{*}_j=\mu_0, \forall i,j $ is the clearing price and $ \left(p^*, b^* \right)   $ is the GNE of the game, where $ \mu_0 $ is constant.
\end{theorem}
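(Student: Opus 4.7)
The plan is to show that the fixed-point equation of the reformulated iteration \eqref{compact_f} encodes exactly the KKT system of the equivalent centralized problem \eqref{Central_problem_II}, and then appeal to the equivalence established for Theorem \ref{unique} (together with the price relation $b_i^*=D_i-p_i^*-a_i\mu_c^*$ of Remark \ref{former_work}).

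First, I would take the limit in \eqref{compact_f1}–\eqref{compact_f2}. At the equilibrium point $\omega^*=(p^*,z^*,\mu^*)$, the extrapolation collapses to $\tilde\omega^*=\omega^*$, so \eqref{compact_f2} becomes $\omega^*=({\rm Id}+\Theta^{-1}\mathcal{U})^{-1}\omega^*$, which is equivalent to $0\in\Theta^{-1}\mathcal{U}(\omega^*)$. Since $\Theta$ is positive definite, this is in turn equivalent to $0\in\mathcal{U}(\omega^*)$. Unfolding the definition \eqref{OperU} yields the three inclusions
\begin{subequations}
\begin{align}
0 &\in \mu^* + N_\Omega(p^*) + \nabla\tilde h(p^*), \label{EQa}\\
0 &= L\mu^*, \label{EQbb}\\
0 &= -p^* + D - Lz^*. \label{EQc}
\end{align}
\end{subequations}

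Next I would exploit the structure of the Laplacian. From \eqref{EQbb} and the connectivity of the communication graph, $\mu^*$ lies in the one-dimensional null space $\operatorname{span}\{\mathbf{1}\}$, so there exists a scalar $\mu_0$ with $\mu_i^*=\mu_0$ for every $i$; this is the consensus on the multipliers. Pre-multiplying \eqref{EQc} by $\mathbf{1}^{\rm T}$ and using $\mathbf{1}^{\rm T}L=0$ gives $\mathbf{1}^{\rm T}p^*=\mathbf{1}^{\rm T}D$, i.e., the coupling equality constraint \eqref{Central_problem2_II}. Substituting $\mu^*=\mu_0\mathbf{1}$ into \eqref{EQa} yields $0\in\mu_0\mathbf{1}+N_\Omega(p^*)+\nabla\tilde h(p^*)$, together with the primal feasibility $\underline p_i\le p_i^*\le\overline p_i$ enforced by $N_\Omega$. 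These three pieces constitute exactly the KKT conditions of the convex program \eqref{Central_problem_II} with $\mu_0$ playing the role of the Lagrange multiplier associated with \eqref{Central_problem2_II}. Since $\tilde h$ is strongly convex under Assumption \ref{convex}, the KKT system has a unique primal-dual solution, so $(p^*,\mu_0)$ is the unique minimizer–multiplier pair of \eqref{Central_problem_II}.

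Finally I would invoke the equivalence between \eqref{Central_problem_II} and the game \eqref{Sharing_problem_II} established in the proof of Theorem \ref{unique} (and recalled in Remark \ref{former_work}): defining $b_i^*=D_i-p_i^*-a_i\mu_0$ recovers the GNE of the energy sharing game, and the price relation \eqref{price} together with $\mathbf{1}^{\rm T}p^*=\mathbf{1}^{\rm T}D$ gives $\mu_0=\mu_c(b^*)$, confirming that $\mu_0$ is the market clearing price. Assumption \ref{uniqueness} ensures uniqueness of the GNE so there is no ambiguity in this identification.

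The routine bookkeeping is the unpacking of $\mathcal{U}$ and the Laplacian algebra. The only genuinely delicate step is the translation of the inclusion $0\in\mu_0\mathbf{1}+N_\Omega(p^*)+\nabla\tilde h(p^*)$ into the full KKT conditions of \eqref{Central_problem_II} and then back to the GNE of the original coupled game; this requires citing the equivalence direction already used in Theorem \ref{unique} and verifying that the identity $b_i^*=D_i-p_i^*-a_i\mu_0$ produces a $b^*$ satisfying both \eqref{Sharing_problem2} and \eqref{Sharing_problem3} simultaneously, which is where Assumption \ref{uniqueness} is essential to rule out degeneracy.
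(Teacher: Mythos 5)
Your proposal is correct and follows essentially the same route as the paper: pass to the fixed point of \eqref{compact_f} (equivalently, the stationarity of \eqref{compact_algorithm}) to obtain $0\in\mathcal{U}(\omega^*)$, use $L\mu^*=0$ for consensus on the multipliers and $\textbf{1}^{\rm T}L=0$ for the power balance, and recognize the result as the KKT system \eqref{KKT_game_eq} already shown in the proof of Theorem \ref{unique} to characterize the GNE via $b_i^*=D_i-p_i^*-a_i\mu_c^*$. The only additional care you take beyond the paper is making explicit the graph-connectivity needed for $\ker L=\operatorname{span}\{\mathbf{1}\}$ and the identification $\mu_0=\mu_c(b^*)$, both of which the paper leaves implicit.
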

\begin{proof}
	From Definition \ref{def:ep.1} and \eqref{compact_algorithm}, we have 
	\begin{subequations}\label{equilibrium}
		\begin{align}\label{equilibrium1}
		&0\in N_{\Omega}(p^*) +\nabla\tilde h(p^*)+ \mu^*\\
		\label{equilibrium2}
		&0=L\mu^*\\
		\label{equilibrium3}
		&0=-p^*+D-Lz^*
		\end{align}
	\end{subequations}
	From \eqref{equilibrium2}, we have 
	\begin{equation}\label{equilibrium4}
	\mu^{*}_i=\mu^{*}_j=\mu_0, \forall i,j  
	\end{equation}
	From \eqref{equilibrium3}, we have 
	\begin{equation}\label{equilibrium5}
	0=-\textbf{1}^{\rm T}p^*+\textbf{1}^{\rm T}D-\textbf{1}^{\rm T}Lz^*=-\textbf{1}^{\rm T}p^*+\textbf{1}^{\rm T}D
	\end{equation}
	The equations \eqref{equilibrium1}, \eqref{equilibrium4} and \eqref{equilibrium5} are the KKT condition \eqref{KKT_game_eq}. This completes the proof.
\end{proof}

\subsection{Convergence}
In this subsection, we analyze the convergence of the algorithm SGNE based on the compact form \eqref{compact_f}. First, we give the following result. 
\begin{theorem}
	Suppose Assumption \ref{convex} and Assumption \ref{Slater} hold. Given a parameter $\eta$ satisfying $0<\eta<\frac{1}{3}$ and the step sizes  $\Gamma$, $\alpha_{z}$ and $\alpha_\mu$ such that $\Theta$ is positive definite. Then with SGNE, $w_t$ converges to a primal-dual optimal solution $ \omega^* $ of the problem \eqref{Central_problem_II}. Then, $ (p^*, b^*) $ is the GNE of \eqref{Sharing_problem_II}.
\end{theorem}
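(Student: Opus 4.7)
The plan is to view the iteration \eqref{compact_f} as an inertial Krasnosel'ski{\v{i}}-Mann scheme driven by the operator $\mathcal{T}:=({\rm Id}+\Theta^{-1}\mathcal{U})^{-1}$, which by Lemma~\ref{Lemma1}(3) is firmly nonexpansive in the $\Theta$-induced norm, and then invoke the standard energy-estimate argument to deduce Fej\'er-type convergence of $\{\omega_t\}$ to a point in ${Fix}(\mathcal{T})$.

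First I would identify ${Fix}(\mathcal{T})$ with the zero set of $\mathcal{U}$; reading off $0\in\mathcal{U}(\omega^*)$ componentwise and comparing with the KKT system of \eqref{Central_problem_II}, each $\omega^*=(p^*,z^*,\mu^*)\in {Fix}(\mathcal{T})$ supplies a primal-dual optimum of \eqref{Central_problem_II}. Combined with the existence part of Theorem~\ref{unique} (equivalently, KKT solvability of \eqref{Central_problem_II}), this ensures ${Fix}(\mathcal{T})\ne\emptyset$, and any fixed point encodes via $b_i^{*}=D_i-p^{*}_i-a_i\mu_c^*$ a GNE of \eqref{Sharing_problem_II} by Theorem~\ref{Equivalence}.

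The core of the proof is a Lyapunov argument. Fix $\omega^*\in {Fix}(\mathcal{T})$ and set $\phi_t:=\|\omega_t-\omega^*\|_\Theta^2$, $\Delta_t:=\|\omega_t-\omega_{t-1}\|_\Theta^2$. Firm nonexpansiveness of $\mathcal{T}$ in $\|\cdot\|_\Theta$, applied to $\omega_{t+1}=\mathcal{T}(\tilde\omega_t)$ and $\omega^*=\mathcal{T}(\omega^*)$, yields
\[
\|\omega_{t+1}-\omega^*\|_\Theta^2 \le \|\tilde\omega_t-\omega^*\|_\Theta^2 - \|\omega_{t+1}-\tilde\omega_t\|_\Theta^2.
\]
Substituting $\tilde\omega_t=\omega_t+\eta(\omega_t-\omega_{t-1})$ and expanding via the polarization identity \eqref{relationship} produces, after routine algebra, an inequality of the shape
\[
\phi_{t+1} - \eta\phi_t + \rho\Delta_{t+1} \le \phi_t - \eta\phi_{t-1} + \rho\Delta_t - \kappa\Delta_{t+1}
\]
for constants $\rho,\kappa>0$, provided $0<\eta<\tfrac{1}{3}$. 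A standard telescoping argument then shows that $V_t:=\phi_t-\eta\phi_{t-1}+\rho\Delta_t$ is bounded below (using $\eta<1$) and nonincreasing, yielding $\sum_t\Delta_t<\infty$ so that $\omega_{t+1}-\omega_t\to 0$ and $\tilde\omega_t-\omega_t\to 0$, while $\{\omega_t\}$ remains bounded.

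To close, I would apply an Opial-type argument in $\|\cdot\|_\Theta$: any cluster point $\bar\omega$ of $\{\omega_t\}$ satisfies $\bar\omega=\mathcal{T}(\bar\omega)$, since $\mathcal{T}$ is nonexpansive and $\omega_{t+1}-\tilde\omega_t\to 0$, while Fej\'er monotonicity of $\phi_t$ with respect to every fixed point forces a single cluster point, so $\omega_t\to\omega^*\in{Fix}(\mathcal{T})$ in $\|\cdot\|_\Theta$. Theorem~\ref{Equivalence} then upgrades this into convergence of $(p_t,b_t)$ to the unique GNE of \eqref{Sharing_problem_II}. The main obstacle will be the Lyapunov algebra in the third step: carefully tracking the inner-product cross terms generated by the inertial extrapolation is delicate, and the threshold $\eta<1/3$ emerges only after matching several competing quadratic forms, consuming essentially all of the slack in the estimate.
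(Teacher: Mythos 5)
Your proposal is correct and follows essentially the same route as the paper: the key inequality you derive from firm nonexpansiveness of $({\rm Id}+\Theta^{-1}\mathcal{U})^{-1}$ is exactly the paper's estimate $\langle\Theta(\tilde\omega_t-\omega_{t+1}),\omega_{t+1}-\omega^*\rangle\ge 0$ obtained from monotonicity of $\mathcal{U}$ and $0\in\mathcal{U}(\omega^*)$, your Lyapunov sequence $V_t=\phi_t-\eta\phi_{t-1}+\rho\Delta_t$ matches the paper's $s_k$ with $\rho=2\eta$, and the concluding subsequence/Fej\'er argument is the same. The only nitpick is that under Assumptions 1--2 alone the GNE need not be unique (that requires Assumption 3), but this does not affect the convergence claim.
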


\begin{proof}
	Frist, we prove that $ \lim\limits_{t\to\infty}(\omega_{t+1}-\omega_{t})=0 $. Given any equilibrium point $ \omega^* $, use the equation \eqref{relationship}, and we have 
	\begin{align}
	&\left\|\omega_t-\omega^*\right\|_\Theta^2-\left\|\omega_{t+1}-\omega^*\right\|_\Theta^2\nonumber\\
	&\qquad=2\langle\omega_t-\omega_{t+1}, \omega_t-\omega^*\rangle_\Theta- \left\|\omega_t-\omega_{t+1}\right\|_\Theta^2  \nonumber\\
	&\qquad=2\langle\omega_t-\omega_{t+1}, \omega_{t+1}-\omega^*\rangle_\Theta+ \left\|\omega_t-\omega_{t+1}\right\|_\Theta^2  \nonumber\\
	&\qquad=\left\|\omega_t-\omega_{t+1}\right\|_\Theta^2+2\langle\tilde\omega_t-\omega_{t+1}, \omega_{t+1}-\omega^*\rangle_\Theta\nonumber\\
	&\qquad\qquad\qquad-2\eta\langle\omega_t-\omega_{t-1},\omega_{t+1}-\omega^*\rangle_\Theta
	\end{align}
	where the last equation is derived by integrating \eqref{compact_f1}.
	
	By \eqref{compact_e2}, we have 
	\begin{equation}
	0\in\mathcal{U}(\omega^*)
	\end{equation}
	
	Since $ \mathcal{U} $ is maximally monotone, we have
	\begin{equation}
	\langle\Theta(\tilde\omega_t-\omega_{t+1}),\omega_{t+1}-\omega^*\rangle\ge 0
	\end{equation}
	Thus, we have 
	\begin{equation}	
	\begin{split}\label{middle1}
	&\left\|\omega_{t+1}-\omega^*\right\|_\Theta^2-\left\|\omega_t-\omega^*\right\|_\Theta^2\\
	&\qquad\le2\eta\langle\omega_t-\omega_{t-1},\omega_{t+1}-\omega^*\rangle_\Theta-\left\|\omega_t-\omega_{t+1}\right\|_\Theta^2
	\end{split}
	\end{equation}
	
	Moreover, 
	\begin{equation}	
	\begin{split}\label{middle2}
	&\eta(\left\|\omega_{t}-\omega^*\right\|_\Theta^2-\left\|\omega_{t-1}-\omega^*\right\|_\Theta^2)\\
	&\qquad=\eta(2\langle\omega_t-\omega_{t-1}, \omega_t-\omega^*\rangle_\Theta- \left\|\omega_t-\omega_{t-1}\right\|_\Theta^2 )
	\end{split}
	\end{equation}
	
	From \eqref{middle1} and \eqref{middle2}, we have
	\begin{align}\label{recursion}
	&\left\|\omega_{t+1}-\omega^*\right\|_\Theta^2-\left\|\omega_t-\omega^*\right\|_\Theta^2\nonumber\\
	&\qquad\qquad\qquad\qquad-\eta(\left\|\omega_{t}-\omega^*\right\|_\Theta^2-\left\|\omega_{t-1}-\omega^*\right\|_\Theta^2)\nonumber\\
	&\le 2\eta\langle\omega_t-\omega_{t-1},\omega_{t+1}-\omega^*\rangle_\Theta-\left\|\omega_t-\omega_{t+1}\right\|_\Theta^2\nonumber\\
	&\qquad-\eta(2\langle\omega_t-\omega_{t-1}, \omega_t-\omega^*\rangle_\Theta- \left\|\omega_t-\omega_{t-1}\right\|_\Theta^2 )\nonumber\\
	&=-\left\|\omega_t-\omega_{t+1}\right\|_\Theta^2+2\eta\langle\omega_t-\omega_{t-1},\omega_{t+1}-\omega_{t}\rangle_\Theta\nonumber\\
	&\qquad+\eta\left\|\omega_t-\omega_{t-1}\right\|_\Theta^2\nonumber\\
	&\le(\eta-1)\left\|\omega_t-\omega_{t+1}\right\|_\Theta^2+2\eta\left\|\omega_t-\omega_{t-1}\right\|_\Theta^2
	\end{align}
	
	Define a sequence $ s_k= \left\|\omega_{t}-\omega^*\right\|_\Theta^2-\eta\left\|\omega_{t-1}-\omega^*\right\|_\Theta^2 + 2\eta\left\|\omega_t-\omega_{t-1}\right\|_\Theta^2 $. Then
	\begin{align}\label{sequence_s}
	s_{k+1}-s_k&=\left\|\omega_{t+1}-\omega^*\right\|_\Theta^2 - \left\|\omega_{t}-\omega^*\right\|_\Theta^2 \nonumber\\
	&\quad -\eta(\left\|\omega_{t}-\omega^*\right\|_\Theta^2-\left\|\omega_{t-1}-\omega^*\right\|_\Theta^2) \nonumber\\
	&\quad + 2\eta\left\|\omega_{t+1}-\omega_{t}\right\|_\Theta^2 - 2\eta\left\|\omega_{t}-\omega_{t-1}\right\|_\Theta^2 \nonumber\\
	&\le (3\eta-1)\left\|\omega_{t+1}-\omega_{t}\right\|_\Theta^2
	\end{align}
	
	From \eqref{sequence_s}, we have $  (1-3\eta)\sum\nolimits_{t=1}^k\left\|\omega_{t+1}-\omega_{t}\right\|_\Theta^2\le s_1-s_{k+1}\le s_1+ \eta\left\|\omega_{t}-\omega^*\right\|_\Theta^2 $. 
	As $ 0<\eta<\frac{1}{3} $, we have $ s_{k+1}\le s_k\le s_1 $ and $ \left\|\omega_{t}-\omega^*\right\|_\Theta^2-\eta\left\|\omega_{t-1}-\omega^*\right\|_\Theta^2\le s_k\le s_1 $. Then, $ \left\|\omega_{t}-\omega^*\right\|_\Theta^2\le \eta^k(\left\|\omega_{1}-\omega^*\right\|_\Theta^2-\frac{\mu_1}{1-\alpha})+\frac{\mu_1}{1-\alpha} $, i.e., $\omega_{t}$ is bounded. Thus,  $  (1-3\eta)\sum\nolimits_{t=1}^k\left\|\omega_{t+1}-\omega_{t}\right\|_\Theta^2\le s_1+ \eta^{k+1}(\left\|\omega_{1}-\omega^*\right\|_\Theta^2-\frac{\mu_1}{1-\alpha})+\frac{\eta\mu_1}{1-\alpha} $. As $0<\eta<\frac{1}{3} $, we have 
	\begin{equation}\label{part1}
	\sum\nolimits_{t=1}^\infty\left\|\omega_{t+1}-\omega_{t}\right\|_\Theta^2<\infty
	\end{equation}
	Thus, $ \lim\limits_{t\to\infty}(\omega_{t+1}-\omega_{t})=0 $.
	
	Then, we prove that $ \left\|\omega_{t}-\omega^*\right\|_\Theta^2 $ also converges by the similar analysis in \cite[Theorem 6.3]{yi2017distributed}. We also write it here for completeness.
	
	Denote $\varphi_{t}=\max \left\{0,\left\|\omega_t-\omega^*\right\|_\Theta^2-\left\|\omega_{t+1}-\omega^*\right\|_\Theta^2\right\}$ and $\zeta_{t}=$
	$2\eta\left\|\omega_t-\omega_{t-1}\right\|_{\Theta}^{2}$. We know that $ \varphi_{t} $ is lower bounded. Recall \eqref{recursion}, and we have $\varphi_{t+1} \leq \eta \varphi_{t}+\zeta_{t}$. Apply this relationship recursively, and we have 
	\begin{equation}\label{middle3}
	\varphi_{t+1} \leq \eta^{t} \varphi_{1}+\sum\nolimits_{i=0}^{t-1} \eta^{i} \zeta_{t-i}
	\end{equation}
	Adding both sides of \eqref{middle3} from $ t=1\rightarrow\infty $, we have 
	\begin{equation}
	\sum\nolimits_{i=1}^{\infty} \varphi_{i}  \leq \frac{\varphi_{1}}{1-\eta}+\frac{1}{1-\eta} \sum\nolimits_{t=1}^{\infty} \zeta_{t} 
	\end{equation}
	From \eqref{part1}, we know $ \sum\nolimits_{t=1}^{\infty} \zeta_{t} <\infty $. This implies that $ \sum\nolimits_{t=1}^{\infty} \varphi_{t} $ is bounded, non-decreasing and  converges. 
	
	Consider another sequence $\left\{\left\|\omega_{t}-\omega^{*}\right\|_{\Theta}^{2}-\sum_{i=1}^{t} \varphi_{i}\right\} $, which is lower bounded. Moreover,
	\begin{equation}
	\setlength{\abovedisplayskip}{4pt}	
	\setlength{\belowdisplayskip}{4pt}
	\begin{split} 
	&\left\|\omega_{t+1}-\omega^{*}\right\|_{\Theta}^{2}-\sum\limits_{i=1}^{k+1} \varphi_{i} =\left\|\omega_{t+1}-\omega^{*}\right\|_{\Theta}^{2}-\varphi_{k+1}-\sum\limits_{i=1}^{t} \varphi_{i} \nonumber\\
	&\leq \left\|\omega_{t+1}-\omega^{*}\right\|_{\Theta}^{2}-\left\|\omega_{t+1}-\omega^{*}\right\|_{\Theta}^{2}+\left\|\omega_{t}-\omega^{*}\right\|_{\Theta}^{2}-\sum\limits_{i=1}^{t} \varphi_{i} \\
	&=\left\|\omega_{t}-\omega^{*}\right\|_{\Theta}^{2}-\sum\nolimits_{i=1}^{t} \varphi_{i} 
	\end{split}
	\end{equation}
	This implies that $\left\{\left\|\omega_{t}-\omega^{*}\right\|_{\Theta}^{2}-\sum_{i=1}^{t} \varphi_{i}\right\} $ is a non-increasing sequence and also converges. $ \left\|\omega_{t}-\omega^{*}\right\|_{\Theta}^{2} $ is the sum of two convergent sequences, and also converges.

	Since $\omega_{t}$ is bounded, it has a convergent subsequence $\omega_{n_t}$ converging to a point $\tilde\omega^*$. From $ \lim\limits_{t\to\infty}(\omega_{t+1}-\omega_{t})=0 $, we have $ \lim\limits_{t\to\infty}(\omega_{n_t+1}-\omega_{n_t})=0 $ and $ \lim\limits_{t\to\infty}(\omega_{n_t-1}-\omega_{n_t})=0 $. Due to the continuity of the righthand side of \eqref{compact_f}, we have $ \tilde\omega^*=(\rm Id+\Theta^{-1}\mathcal{U})^{-1}\tilde\omega^* $. Thus, $ \omega^* $ is an equilibrium point of the sequence $\omega_t$. This also implies that $ \left\|\omega_{t}-\tilde\omega^*\right\|_\Theta^2 $ converges. Because $ \left\|\omega_{n_t}-\tilde\omega^*\right\|_\Theta^2 $ converges to zero, $ \left\|\omega_{t}-\tilde\omega^*\right\|_\Theta^2 $ also converges to zero. 
	
	Based on Theorem \ref{Equivalence}, we have $ (p^*, b^*) $ is the GNE of \eqref{Sharing_problem_II}.
	This completes the proof.
\end{proof}



Invoking Theorem \ref{uniqueness}, if Assumption \ref{uniqueness} also holds, the algorithm will converge to the unique GNE of the generalized Nash game \eqref{Sharing_problem_II}, equivalently the energy sharing game  \eqref{Sharing_problem}.

\begin{figure}[t]
	\centering
	\includegraphics[width=0.47\textwidth]{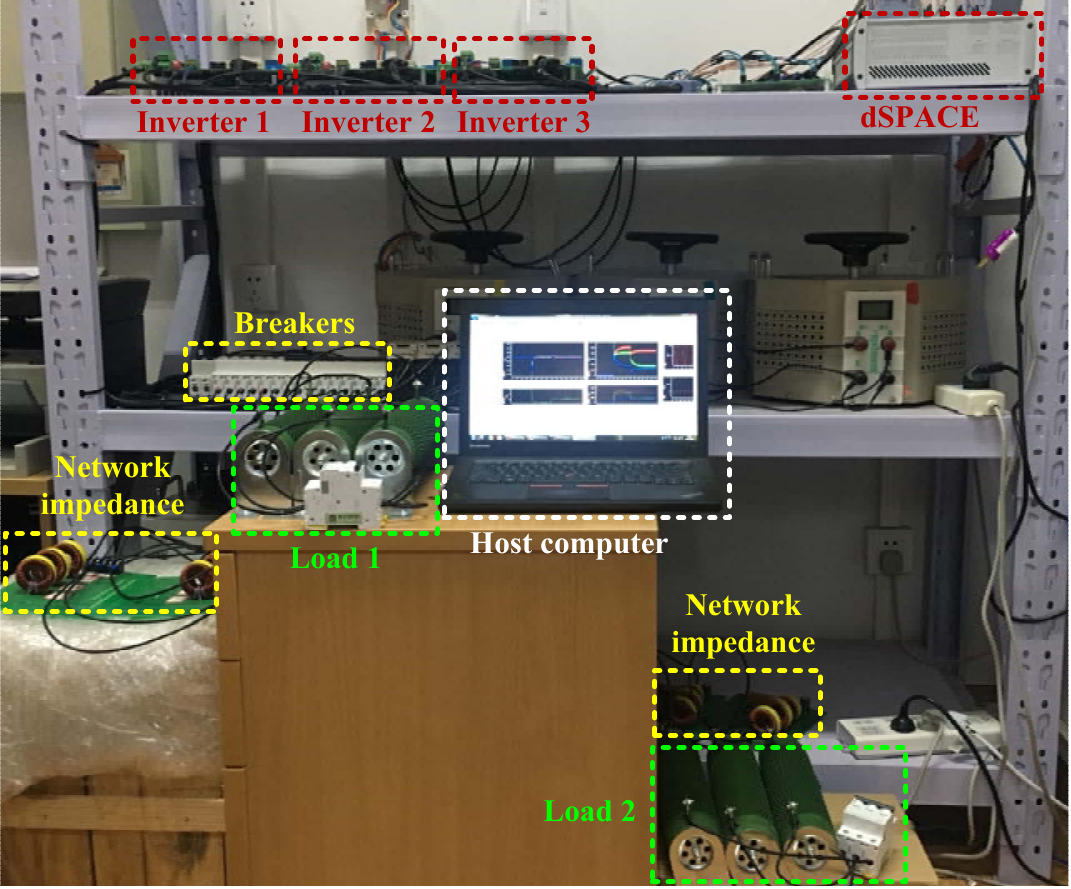}
	\caption{Experiment platform based on dSPACE RTI 1202 controller}
	\label{fig:experiment_platform}
\end{figure}

\begin{figure}[t]
	\centering
	\includegraphics[width=0.3\textwidth]{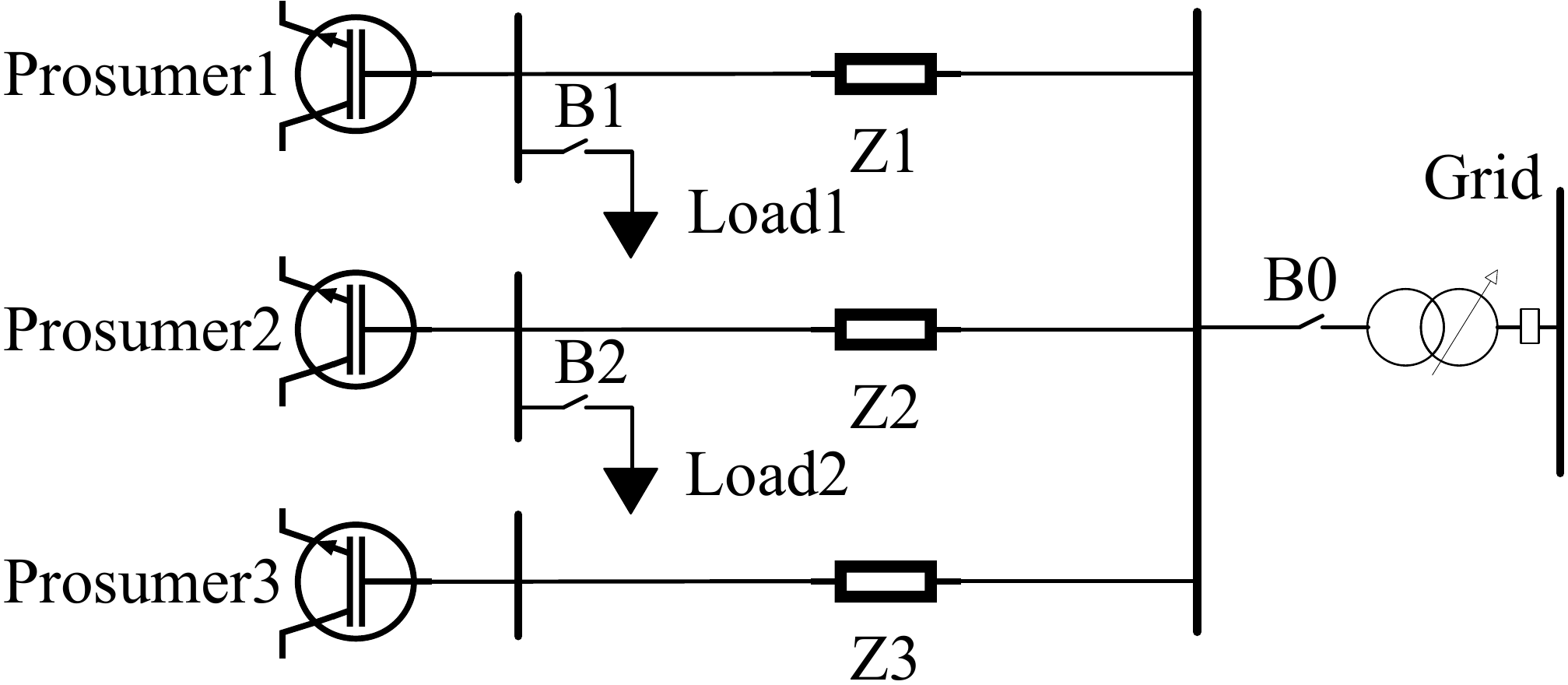}
	\caption{Topology of the experiment system}
	\label{fig:experiment_topology}
\end{figure}

\section{Experiments and Simulation Studies}\label{Experiments}
In this section, experiments and numerical simulations are introduced to verify the effectiveness of the proposed method. First, experiments with three prosumers are carried out to illustrate the basic properties of the algorithm. Then, a case with 123 prosumers is investigated to test the scalability, where the communication topology is identical to the topology of the IEEE 123-bus system\cite{kersting1991radial}. 

\subsection{Experimental results}

The proposed method is verified on an experimental platform based on the dSPACE RTI 1202 controller, which is presented in Fig.\ref{fig:experiment_platform}. It is composed of three inverters, one dSPACE RTI 1202 controller, two switchable loads, and one host computer. Each inverter represents a prosumer, which can both produce and consume power. The system topology is given in Fig.\ref{fig:experiment_topology}. In the experiments, the breaker B0 is open, i.e., the system operates in an isolated mode. One load is connected at the bus of Prosumer1 and the other is at the bus of Prosumer2. Three prosumers are connected through impedances. 
The communication topology is $\text{Prosumer1}\leftrightarrow \text{Prosumer2}\leftrightarrow \text{Prosumer3}\leftrightarrow \text{Prosumer1} $. 
The disutility function is $ h_i(p_i)=\frac{1}{2}c_ip_i^2+d_ip_i $ with $ c_1=0.00075, c_2=0.0006, c_3= 0.001, d_i=0 $. The price elasticity of each prosumer is set as $ a_i=-1000 $. The load demand of each prosumer is $D_1=730 \text{W}, D_2= 365 \text{W}, D_3=0$.

The simulation scenario is: 1) at $t=10$s, two loads are connected; 2) at $t=30$s, load 2 is disconnected. Then, each DG regulates its generation to balance the power difference. The frequency dynamics are illustrated in Fig.\ref{experiment_frequency}. When loads are connected, the frequency drops to about $ 49.2 $Hz and recovers to the nominal value in four seconds. On the contrary, when load 2 is switched off, the frequency increases and recovers in $ 3 $ seconds. 
\begin{figure}[t]
	\centering
	\includegraphics[width=0.32\textwidth]{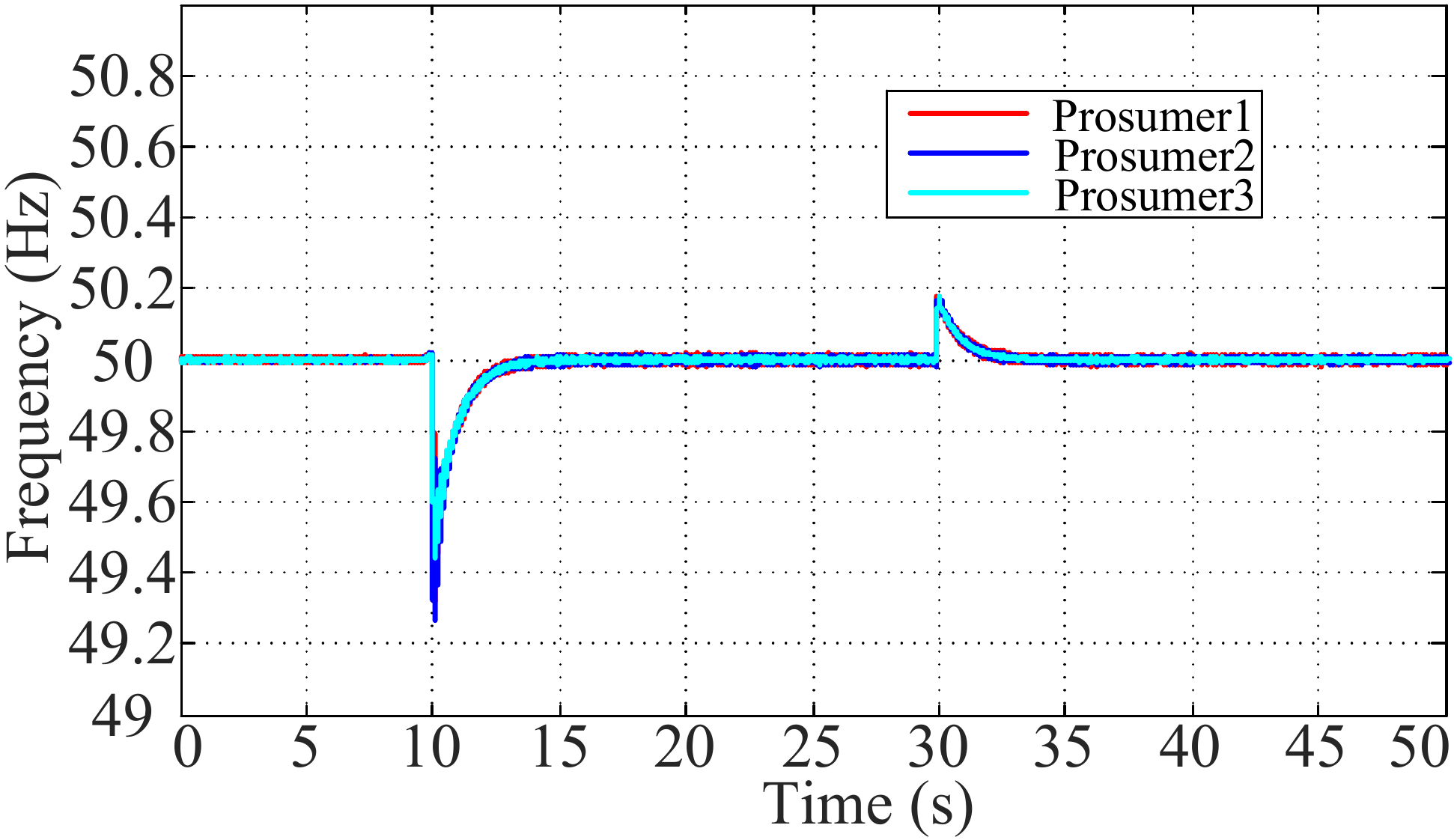}
	\caption{The frequency dynamics of three DGs }
	\label{experiment_frequency}
\end{figure}

\begin{figure}[t]
	\centering
	\includegraphics[width=0.475\textwidth]{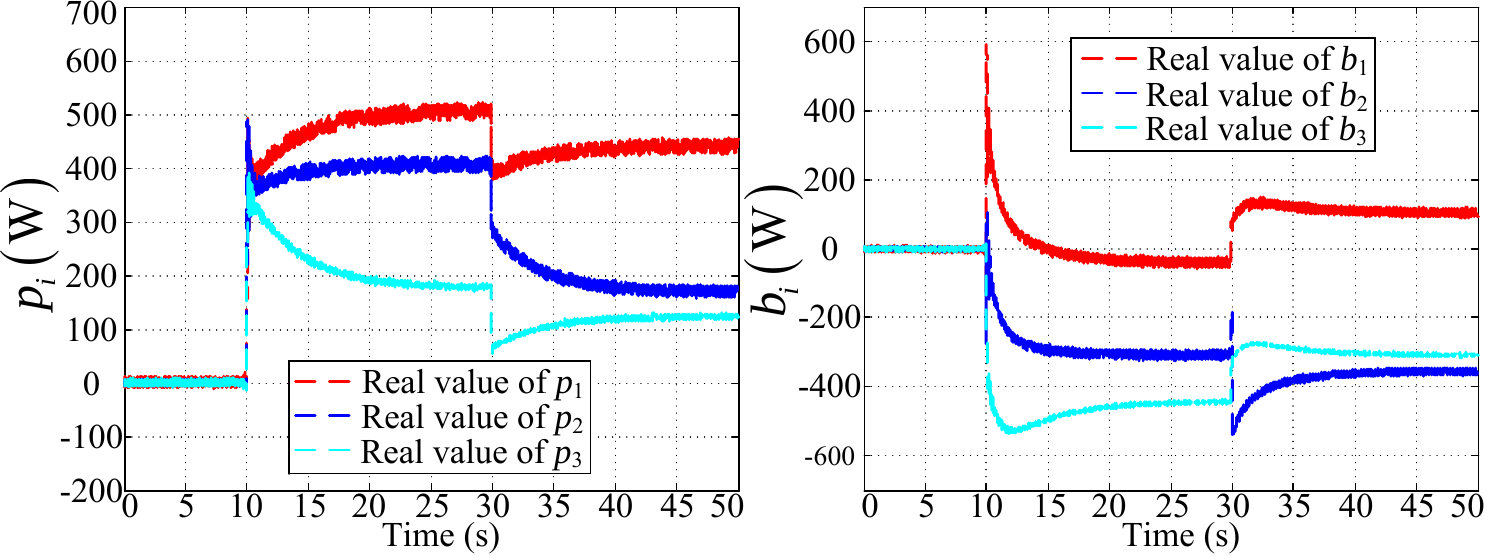}
	\caption{Dynamics of the GNE}
	\label{experiment_power}
\end{figure}
The GNE of the first stage is $ (p_1^*=505.4 \text{W}, b_1^*=103.3\text{W} ), (p_2^*=408.4\text{W}, b_2^*=355.0\text{W} ), (p_3^*=177.8\text{W}, b_3^*=310.0\text{W} )$. 
Dynamics of seeking the GNE is illustrated in Fig.\ref{experiment_power}, where the left part is the generation $p_i$ of each prosumer and the right part is the purchase willingness $ b_i $. They vary slightly around the equilibrium. In this stage, Prosumer1 buys power from Prosumer2 and Prosumer3. 
The GNE of the second stage is $ (p_1^*=440.6 \text{W}, b_1^*=-41.7\text{W} ), (p_2^*=168.9\text{W}, b_2^*=-310.0\text{W} ), (p_3^*=123.9\text{W}, b_3^*=-444.2\text{W} )$.
Compared with the results obtained from the centralized method, the steady state generations are optimal to the problem \eqref{Central_problem_II}. This shows that the GNE is obtained. Moreover, real power can always trace the reference value. This verifies that the proposed method can get the correct results in the experiment. 
\begin{figure}[t]
	\centering
	\includegraphics[width=0.42\textwidth]{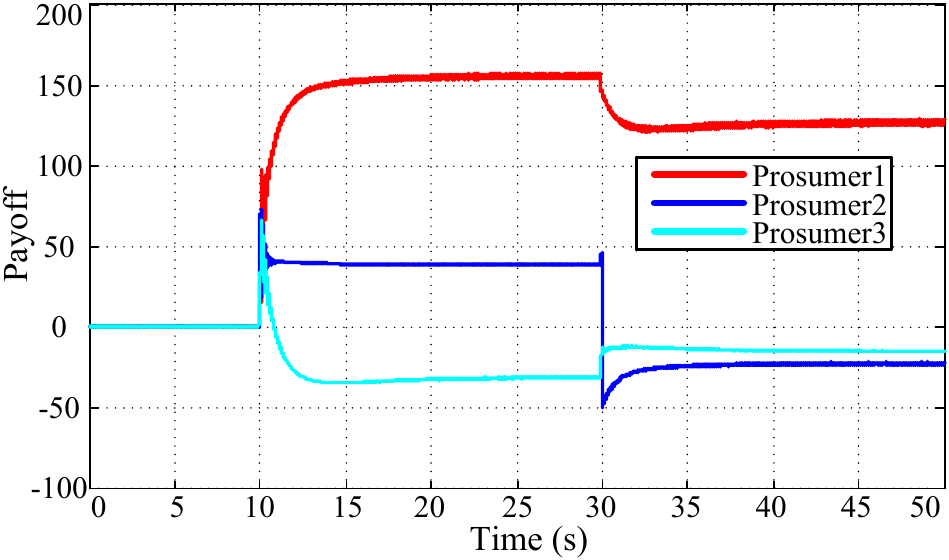}
	\caption{Dynamics of the payoff functions }
	\label{experiment_payoff}
\end{figure}
Dynamics of the payoff functions are given in Fig.\ref{experiment_payoff}, which also converges in five seconds. In the first stage, Prosumer3 earns profit by selling power to Prosumer1 and Prosumer2. In the second stage, there is only Prosumer1 buying power, while others selling. The cost of Prosumer2 changes from positive to negative, which implies that it earns profit by selling power to Prosumer1. 

\subsection{Scalability}
In this subsection, the scalability of proposed method is illustrated. The 123 prosumers is numbered with the same number of IEEE 123-bus system. $ c_i $ in the disutility function is $ c_i=1+\frac{i}{123} $ and $ d_i=1 $.  $ a_i $ of these 123 prosumers are all $ -0.1 $. 
The effect of $\eta$ is also investigated, which is set as $ 0, 0.1, 0.2, 0.33 $ respectively. The convergence of $ p_{i,t} $ to the equilibrium point is given in Fig.\ref{different_eta}. 
\begin{figure}[t]
	\centering
	\includegraphics[width=0.47\textwidth]{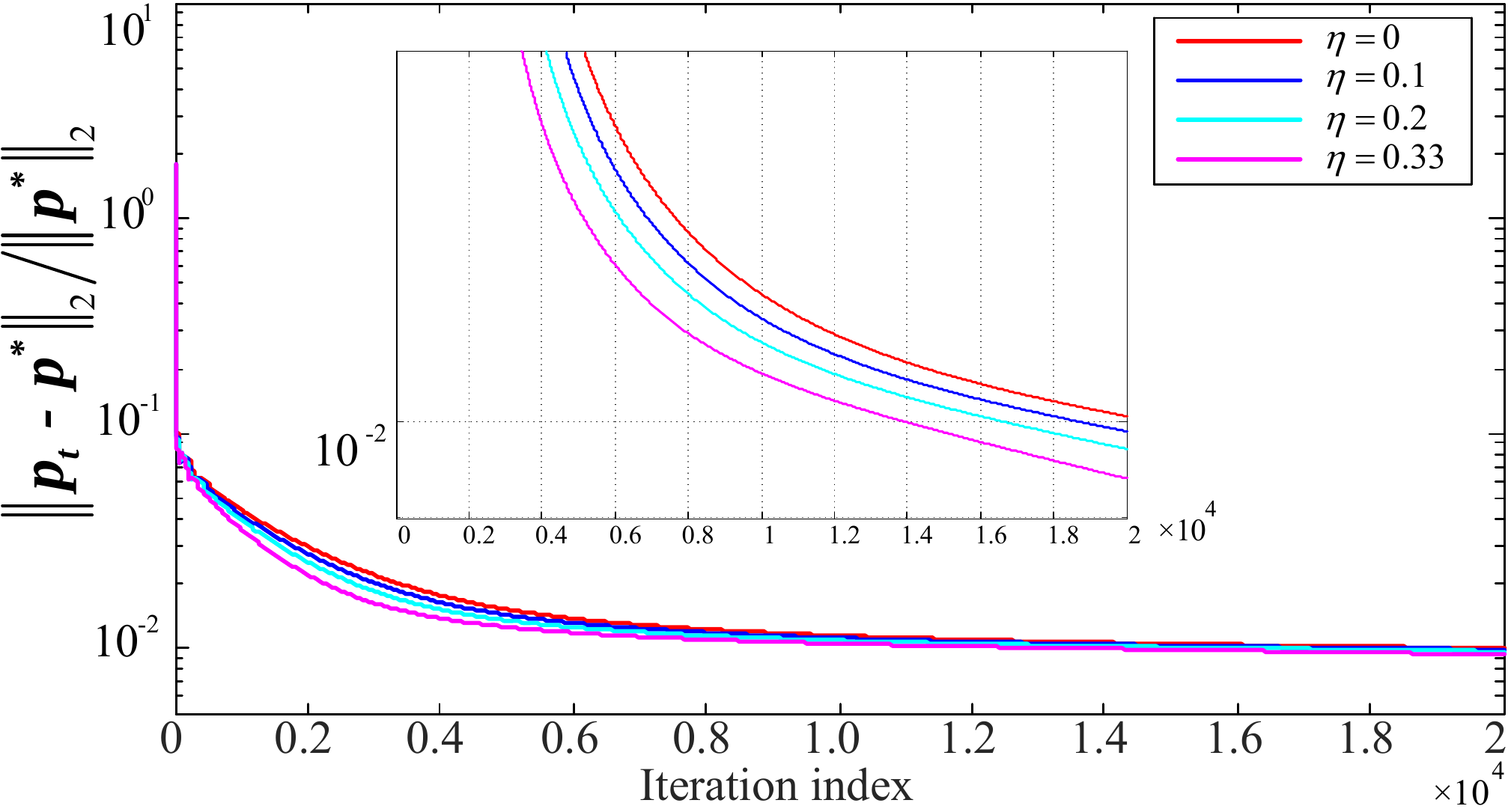}
	\caption{Error of 123 prosumers with different $\eta$}
	\label{different_eta}
\end{figure}
With $ \eta=0 $, it takes more than $ 20,000 $ iterations to make the relative error smaller than $ 1\% $. With $ \eta=0.33 $, it takes about $ 14,000 $ iterations, which is reduced by $ 30\% $. It is also observed that a larger $ \eta $ results in a higher convergence rate.

\section{Conclusion}\label{Conclusion}
In this work, the energy sharing game among prosumers is formulated, whose objective function for each prosumer is associated with decisions of all the prosumers. To solve it in a fully distributed way, the game is transformed to an equivalent optimization problem. Then, a Krasnosel'ski{\v{i}}-Mann iteration type algorithm is devised to solve the problem and consequently find the GNE in a distributed manner  with neighboring communication. The convergence of the algorithm is proved rigorously. Experimental results with three prosumers show that the GNE can be obtained at a fast speed. Simulations on 123 prosumers verify the scalability of the proposed method.

This paper offers a different perspective for seeking the GNE in the GNG. If the objective function of one prosumer is associated with too many other prosumers, it is possible to transform the GNG to an equivalent optimization problem to avoid dense communication. In our future work, it is interesting to investigate the conditions of the equivalent transformation for a more general GNG. 

\bibliographystyle{IEEEtran}
\bibliography{mybib}

\appendices
\makeatletter
\@addtoreset{equation}{section}
\@addtoreset{theorem}{section}
\makeatother
\renewcommand{\theequation}{A.\arabic{equation}}
\renewcommand{\thetheorem}{A.\arabic{theorem}}
\setcounter{equation}{0}

\section{Proofs of Theorem \ref{unique}}\label{Appendix_Unique}
\begin{proof}
	$\Rightarrow$ 1) From Lemma \ref{lemma_VI}, we need to find $x^*:=(p^*_i, b^*_i) \in X$ such that 
	\begin{equation}\label{variation_inequality}
	\left\langle F\left({x}^{*}\right), {x}-{x}^{*}\right\rangle \geq 0, \forall {x} \in X
	\end{equation}
	
	Now, we check the KKT condition for $VI(X, {F}(p, b))$ in \eqref{variation_inequality}. In fact, ${x}^{*}$ is a solution to $VI(X, {F}(p, b))$ in \eqref{variation_inequality} if and only if ${x}^{*}$ is the optimal solution to the following optimization problem:
	\begin{equation}\label{variation_inequality2}
	\min _{{x} \in \mathbb{R}^{2n}}\left\langle F\left({x}^{*}\right), {x}\right\rangle, \quad \text { s.t. }\ {x} \in X
	\end{equation}
	If ${x}^{*}$ solves \eqref{variation_inequality2}, there exists $\lambda^{*} \in \mathbb{R}$ such that the
	following optimality conditions (KKT) are satisfied \cite[Theorem 3.25]{ruszczynski2006nonlinear}
	\begin{subequations}
		\begin{align}
		\label{KKT_game1}
		{0} &\in h^{'}_i(p_i^{*}) + \lambda_i^{*}+N_{\Omega_{i}}\left(p_i^{*}\right) \\
		\label{KKT_game2}
		0&= - \left( {2a_i{\mu _c} + {b_i^{*}} + a_i{\lambda _i^{*}} } \right)\frac{1}{{\textbf{1}^{\rm T}a}} + {\mu _c} + {\lambda ^{*}_i}\\
		\label{KKT_game3}
		0&=p_i^{*}+a_i\mu_c^*+b_i^{*}-D_i \\
		\label{KKT_game4}
		0&=\textbf{1}^{\rm T}a \mu_c^*+b_i^{*}+\sum\nolimits_{j\neq i}b_j^{*}
		\end{align}
		\label{KKT_game}From \eqref{KKT_game2}, we know 
	\end{subequations}
	\begin{equation}
	0= - \left( {2a_i{\mu _c^*} + {b_i^{*}} + a_i{\lambda _i^{*}} } \right)\frac{1}{{\textbf{1}^{\rm T}a}} + {\mu _c^*} + {\lambda _i^{*}} \nonumber\\
	\end{equation}
	\begin{equation}\label{EQlambda}
	\setlength{\abovedisplayskip}{4pt}	
	\setlength{\belowdisplayskip}{4pt}
	\Rightarrow \lambda^{*}_i=\frac{\textbf{1}^{\rm T}a-2a_i}{a_i-\textbf{1}^{\rm T}a}\mu^*_c-\frac{1}{a_i-\textbf{1}^{\rm T}a}b_i^{*}
	\end{equation}
	From \eqref{KKT_game3}, we know
	\begin{equation}\label{EQb}
	b_i^{*}=D_i-p^{*}_i-a_i\mu_c^*
	\end{equation}
	Combining \eqref{EQlambda} and \eqref{EQb}, we have 
	\begin{align}\label{EQlambda2}
	\lambda_i^{*}&=\frac{\textbf{1}^{\rm T}a-2a_i}{a_i-\textbf{1}^{\rm T}a}\mu_c^*-\frac{D_i-p_i^{*}-a_i\mu_c^*}{a_i-\textbf{1}^{\rm T}a}\nonumber\\
	&=-\mu_c^* + \frac{p^{*}_i}{a_i-\textbf{1}^{\rm T}a}-\frac{D_i}{a_i-\textbf{1}^{\rm T}a}
	\end{align}
	Then, the KKT condition \eqref{KKT_game} is 
	\begin{subequations}
		\label{KKT_game_eq}
		\begin{align}\label{KKT_game1e}
		{0} &\in h^{'}_i(p_i^{*}) +\frac{p_i^{*}}{a_i-\textbf{1}^{\rm T}a} -\frac{D_i}{a_i-\textbf{1}^{\rm T}a} -\mu_c^{*}+N_{\Omega_{i}}\left(p_i^{*}\right)\\
		0 &=\sum\nolimits_{i\in\mathcal{N}} D_i -\sum\nolimits_{i\in\mathcal{N}} p_i
		\end{align}
	\end{subequations}
	It is also the KKT condition of the following problem
	\begin{subequations}
		\label{Central_problem_III}     
		\setlength{\abovedisplayskip}{4pt}	
		\setlength{\belowdisplayskip}{4pt}
		\begin{align}
		\min\limits_{p}\ & \hat h(p)=\sum\limits_{i\in\mathcal{N}} \left(h_i(p_i)+ \frac{p_i^2}{2(a_i-\textbf{1}^{\rm T}a)}-\frac{D_i}{a_i-\textbf{1}^{\rm T}a}p_i\right)
		\label{Central_problem1_III}
		\\ 
		\text{s.t.} \quad
		&  \sum\nolimits_{i\in\mathcal{N}} p_i =\sum\nolimits_{i\in\mathcal{N}} D_i,\quad \mu_c
		\label{Central_problem2_III}
		\\
		\label{Central_problem3_III}
		&\underline p_i\le p_i\le \overline p_i  
		\end{align}
	\end{subequations}
	where $\mu_c$ is the Lagrangian multiplier. 
	
	Define 
	\begin{align}		
	\tilde{h}_i(p_i)= h_i(p_i)+ \frac{p_i^2}{2(a_i-\textbf{1}^{\rm T}a)}-\frac{D_i}{a_i-\textbf{1}^{\rm T}a}p_i 
	\end{align} 
	We know $  \tilde{h}_i(p_i) $ is strongly convex and the Slater's condition holds by Assumption \ref{Slater}. Thus, $ p_i^* $ and $ \mu_c^* $ exist. By \eqref{EQb}, $ b_i^* $ also exists.
	
	$\Rightarrow$ 2) If Assumption \ref{uniqueness} holds, there exists at least one prosumer $ i $ with $ \underline p_i< p_i^{*}< \overline p_i $. For this $ i $, we have $ \left\{0\right\}=N_{\Omega_{i}}\left(p_i^{*}\right) $. Then, by \eqref{KKT_game1e}, $ \mu_c^* $ is unique. The unique $ b_i^* $ can be obtained from \eqref{EQb}. 
\end{proof}

\ifCLASSOPTIONcaptionsoff
  \newpage
\fi

\end{document}